%% file: main_arxiv.tex
\documentclass[11pt]{article}
\usepackage[utf8]{inputenc}
\input{Packages_and_commands.tex}
\usepackage{algorithm}
\usepackage{algpseudocode}
\usepackage{comment}

\newcommand{\vzeros}{\boldsymbol{0}}
\newcommand{\mZeros}{\boldsymbol{O}}

\title{Testing for peer effects in short panels\\ without identifying the network structure}
\author{Tom Boot \\ University of Groningen \\ \texttt{t.boot@rug.nl} \and
Art\={u}ras Juodis \\ University of Amsterdam and Tinbergen Institute \\ \texttt{a.juodis@uva.nl}}
\date{\today}

\begin{document}

\maketitle
\setlength{\abovedisplayskip}{3pt plus 1pt minus 1pt}
    \setlength{\belowdisplayskip}{3pt plus 1pt minus 1pt}
    \setlength{\abovedisplayshortskip}{3pt plus 1pt minus 1pt}
    \setlength{\belowdisplayshortskip}{3pt plus 1pt minus 1pt}

\begin{abstract}
We propose an instrument-based test for peer effects in short-\(T\) panels when the underlying network is unknown. The test avoids estimating the network by aggregating pairwise comparisons between covariate-projected outcomes of potential peers and residualized own outcomes. The statistic is designed for fixed-\(T\), large-\(N\) settings with individual fixed effects, heterogeneous slopes, heteroskedasticity, and serial dependence within units. We also develop a weighted extension that accounts for additive common time effects. The limiting distribution is standard Gaussian when the norm of the projected component of the outcomes is bounded away from zero, but can be nonstandard when it is local to zero. We derive a conditional Gaussian limit experiment and a feasible conditional Monte Carlo procedure to obtain uniformly valid critical values. Monte Carlo evidence shows that the conditional procedure leads to accurate size control while retaining power against latent peer-effect alternatives.\\
    \textit{Keywords}: Peer effects, networks, panel data.\\
    \textit{JEL codes}: C12, C21, C23.
\end{abstract}
\newpage
\section{Introduction}
Outcomes of individuals, firms, schools, or regions may depend not only on their own characteristics, but also on the outcomes of others. When the network through which these interactions occur is observed, we can measure such peer effects, as in \citet{bramoulle2009identification} and the related discussion in \citet{goldsmithpinkham2013social}; see also the survey by \citet{bramoulle2020peer}. In many empirical settings, however, the relevant network is only partially observed, measured with error, endogenous, or entirely unknown. One option then is to estimate the network \citep{depaula2025identifying}, generally requiring a researcher to observe outcomes over a sufficiently long time period over which the network is stable. This raises the question: can one test for the presence of peer effects in short panels without identifying the network?

In this paper, we propose an instrument-based test for peer effects in fixed-\(T\) panels when the underlying network is unknown. We consider a linear model in which every unit is  potentially affected by every other unit. The null hypothesis is that all peer-effect coefficients are zero, and can be easily adapted to test a subset of the coefficients. The test avoids estimating the network by aggregating pairwise comparisons between projected outcomes of potential peers and residualized own outcomes. The resulting statistic can be interpreted both as an instrument-based test for peer effects and as a specific cross-sectional dependence statistic. The procedure is designed for fixed-\(T\), large-\(N\) panels with individual fixed effects, heteroskedasticity, and serial dependence within units.

The main theoretical complication that arises is that a Gaussian approximation is not uniformly valid over the nuisance parameter space. When the norm of the projected component is bounded away from zero, it is straightforward to see that the statistic admits a Gaussian limit. When the projected component is close to zero, however, the leading term may become degenerate and the limiting distribution can be nonstandard. We capture this behavior through a local Gaussian limit experiment. To attain uniformly valid inference, the key object is a conditioning statistic that removes the local nuisance parameter from the conditional distribution of the test statistic. This gives an exact conditional critical-value procedure in the limit experiment and a feasible conditional Monte Carlo implementation in finite samples.

The limit experiment can be mapped into the limit experiment arising from the two-stage least-squares score in an IV model under weak identification as in \citet{kleibergen2002pivotal}. This also opens up the possibility to construct a pivotal score statistic. We show that neither the score nor the conditional statistic dominates the other.\footnote{Despite this connection, we leave further exploration of the analogy with weak IV, including for example alternative tests as in \citet{anderson1949estimation} or \citet{moreira2003conditional}, for future work.}

Evidently, there are many sources of apparent peer effects, not all of them interesting. In particular, additive time effects can create the appearance of network connectedness. While individual effects are easily accommodated within our framework, time effects are well known to be challenging as removing them induces cross-sectional dependence in residualized outcomes \citep{juodis2022incidental,PesaranXie2026}. Leveraging on the information from observed regressors, we develop a deterministic analogue to the Rademacher weighting in \citet{juodis2022incidental}. The weighting ensures that after averaging over all individual units, the time effects are exactly removed from the model without mechanically inducing dependence by using a projection approach. The weights are not unique, and we provide guidance on how to set the weights to yield power against specific alternatives.

The fixed-\(T\) analysis we consider differs from the existing literature. The closest paper on peer effect testing is \citet{jung2026testing} who develop an Anderson-Rubin statistic based on instruments constructed from all potential peers' covariates. This is attractive because it avoids misspecifying the network, but it creates a many-instrument problem: as the number of potential links grows with the cross-sectional dimension, so does the number of instruments. This leads to a Bekker-style many-IV setting \citep{bekker1994alternative, mikusheva2022inference}. Importantly, the theory requires \(T\) to grow at least as fast as the dimension of the cross-section \(N\). Our construction uses the same identifying idea differently. Rather than forming one high-dimensional many-IV statistic, we aggregate many low-dimensional pairwise scores. This avoids a high-dimensional first-stage projection and yields a fixed-\(T\) asymptotic framework.

Our suggested testing approach is also related to tests for cross-sectional dependence in panels. Classical LM and CD-type tests aggregate residual covariances across units \citep{breusch1980lagrange,pesaran2021general,pesaran2015testing}. Our statistic has a similar aggregate form, but it is not a residual-residual covariance test. It correlates the residualized outcome of one unit with the projected outcome of another unit, where the projection is formed using covariates of the other unit. Unlike existing cross-sectional dependence tests, we consider a fixed-\(T\) framework, allow for autocorrelated residuals, and exploit a conditional limit rather than using a bootstrap as in \citet{hounyo2025serial}.

Finally, the paper relates to work on unobserved, partially observed, or misspecified networks. \citet{manresa2016estimating} estimates the structure of social interactions from panel data, while \citet{depaula2025identifying} provide conditions under which network ties and social interaction parameters can be recovered from panels without direct network observations. \citet{yauck2022estimation} studies peer-effect estimation with sampled networks and missing links. These papers ask what can be recovered about the network or about peer-effect parameters under additional structure. Our goal is more limited but also less demanding: we test the global null of no peer effects without recovering the network and without imposing sparsity, symmetry, row-normalization, or a parametric model for links.

Numerically, we first study the Gaussian limit experiment directly. We consider two stylized designs that differ in the correlation structure between the projected and residualized outcomes. The simulations show that Gaussian critical values can be misleading when the realized projected component is weak. We also illustrate the importance of normalization of the test statistic. While a straightforward self-normalization choice is available, this choice has the drawback that it can be close to zero or even negative. We therefore propose two alternatively normalizations that guarantee positivity. This positivity is a key requirement of the asymptotic arguments to go through. In the simulations, both alternative normalizations outperform the standard normalization that can be undefined in over half of the draws under a particular data generating process. The conditional procedure with positive normalization avoids these issues and delivers reliable size control across the designs.

Our paper also contains an illustrative Monte Carlo study. We find that the feasible plug-in procedure tracks the limit-experiment behavior well. Under the null, empirical rejection probabilities are close to the nominal level in the sample sizes considered. We show that the network structure plays a key role in how much power is lost, with substantial losses under a dense design, while only minor losses under a sparse design.  

As an empirical illustration, we revisit the study on the relative wage gap between high-skilled and low-skilled jobs in US manufacturing as considered in \citet{voigtlander2014skill,yin2021focused,Juodis2022CCER}. We use a rolling window of 8 years to document periods in which the test statistic indicates dependence after accounting for a small set of covariates, individual effects and individual-specific trends. After accounting for the latter, we observe that our weighted and unweighted statistics are close, suggesting only a minor role for time effects in this dataset. We document a prolonged period of dependence between US manufacturing sectors post-2000. Using a decomposition of the test-statistic, we find that this dependence is mainly driven by the relative wage gap in the food/tabacco sector affecting the relative wage gap in the machinery/electronics/instruments sector, while we do not find evidence for the reverse relation.

The rest of the paper is organized as follows. Section~\ref{sec:model_test} introduces the unknown-network peer-effects model and develops the proposed test statistic. Section~\ref{sec:theory} derives the fixed-\(T\) theory, presents the conditional inference procedure. Section~\ref{sec:monte_carlo} reports Monte Carlo evidence. In Section~\ref{sec:application_relative_wages} we illustrate our methodology using real data. Finally, Section~\ref{sec:conclusion} concludes.
\paragraph{Notation}
All vectors are column vectors unless stated otherwise. The vector \(\biota_T\) denotes a \(T\times1\) vector of ones, \(\be_i\) denotes a vector with element \(i\) equal to one and zeros elsewhere, and \(\bI\) denotes the identity matrix. The elementwise product between conformable matrices is denoted by \(\odot\). A single sum over all \(j\) not equal to \(i\) is written as \(\sum_{j\neq i}\). Finally, a.s. stands for statements that hold \emph{almost surely}, while $\perp$ is used to denote independence between two sets of random variables.

\section{Model and proposed test}
\label{sec:model_test}

\subsection{A linear peer-effects model}
\label{subsec:model}

For units \(i=1,\ldots,N\), we observe an outcome vector
$
\by_i=(y_{i1},\ldots,y_{iT})'\in\mathbb R^T$
and a matrix of time-varying regressors
$
\bX_i=(\bx_{i1},\ldots,\bx_{iT})'\in\mathbb R^{T\times p}.
$
The outcome vector is generated by the linear model
\begin{equation}
\label{eq:model}
\by_i
=
\sum_{j\neq i}\alpha_{ij}\by_j
+
\bX_i\bbeta_i
+
\lambda_i\biota_T
+
\bu_i,
\qquad i=1,\ldots,N,
\end{equation}
where \(\lambda_i\) is an individual fixed effect, \(\bbeta_i\) is a unit-specific regression coefficient vector, and \(\bu_i\in\mathbb R^T\) is a vector of idiosyncratic errors. We assume that \((\bu_i,\bX_i)\in\mathbb R^{T\times (p+1)}\) are independent across $i$, and that $\E[\bu_i\mid\bX_i]=\bzero_T$. The baseline model includes individual fixed effects but not time fixed effects. We treat a model with time effects as an extension to this baseline in Section \ref{subsec:time_effects_weighted}.

The coefficients \(\alpha_{ij}\) summarize the effect of unit \(j\)'s outcome on unit \(i\)'s outcome. We do not assume that the researcher observes which of these coefficients are zero, nor do we impose sparsity, symmetry, row-normalization, or a parametric model for the network. Thus the formulation in \eqref{eq:model} should be viewed as an unrestricted representation of possible peer effects.

The null hypothesis is
\begin{equation}
\label{eq:null}
\mathbb{H}_0:\alpha_{ij}=0
\quad\text{for all } i=1,\ldots,N,\quad j\neq i.
\end{equation}
Under the null, the original model collapses to
\begin{equation}
\label{eq:null_model}
\by_i=\bX_i\bbeta_i+\lambda_i\biota_T+\bu_i.
\end{equation}
The purpose of the test is to detect whether the data contain systematic evidence against \eqref{eq:null}. However, the procedure is not designed to estimate the network or identify which links are active. It is important to note that without additional restrictions on the network or on the form of social interactions, the null of no peer effects can be tested, but the endogenous and exogenous components of social interactions cannot generally be separated.

\subsection{The test statistic}

The statistic is built from two unit-specific transformations. First define the projection matrix onto the span of the observed time-varying regressors,
\begin{equation}
\label{eq:Pi}
\bP_i
:=
\bX_i(\bX_i'\bX_i)^{-1}\bX_i',
\end{equation}
and define the augmented regressor matrix
\begin{equation}
\widetilde\bX_i
:=
[\biota_T \;\; \bX_i],
\end{equation}
The individual-level projection matrix onto the span of the intercept and the regressors is
\begin{equation}
\label{eq:Pitilde}
\widetilde\bP_i
:=
\widetilde\bX_i(\widetilde\bX_i'\widetilde\bX_i)^{-1}\widetilde\bX_i',
\end{equation}
with corresponding annihilator
\begin{equation}
\label{eq:Mitilde}
\widetilde\bM_i
:=
\bI_T-\widetilde\bP_i.
\end{equation}
Unless stated otherwise, we assume that the required inverses exist with probability approaching one for all $i=1,\ldots,N$.

For the construction of the test statistic, in the baseline model define $T_a:=T-1$ and let
$\bB\in\mathbb R^{T_a\times T}$ have orthonormal rows spanning the subspace
orthogonal to $\biota_T$,
\begin{equation}
\label{eq:B_reduction}
\bB\bB'=\bI_{T_a},
\qquad
\bB\biota_T=\vzeros_{T_a},
\qquad
\bB'\bB
=
\bI_T-\frac{1}{T}\biota_T\biota_T'.
\end{equation}
Thus $\bB$ provides coordinates for the common residual time space. If additional deterministic regressors (linear trend, seasonal dummies, etc.) are included in every unit's residualization, $\bB$ and $T_a$ are defined analogously for the orthogonal complement of their common span.

Suppose we want to test whether the outcomes for unit $j$ appear in the equation for unit $i$. Under the null \eqref{eq:null}, the covariates included in the equation for unit $j$ may affect unit $j$, but they do not directly affect unit $i$. Similarly to \citet{depaula2025identifying}, this makes these covariates instruments that can be used to test whether $y_j$ appears in the equation for $y_i$. Consider then the outcomes for unit $i$ projected onto the covariates,
\begin{equation}
\label{eq:pi}
\bp_i:=\bB\bP_i \by_i.
\end{equation}
If unit $i$ and unit $j$ are unrelated, this projected signal should not correlate with the following residualized outcomes for unit $j$,
\begin{equation}
\label{eq:ai}
\ba_j:=\bB\widetilde\bM_j \by_j.
\end{equation}
Under the null hypothesis \eqref{eq:null}, using \eqref{eq:null_model},
\begin{equation}
\ba_i
=
\bB\widetilde\bM_i \bu_i,
\end{equation}
because \(\widetilde\bM_i\biota_T=\vzeros_{T}\) and \(\widetilde\bM_i\bX_i=\mZeros_{T\times p}\). Thus \(\ba_i\) is the residualized outcome of unit \(i\) under the null.

For each ordered pair \((i,j)\), \(i\neq j\), consider the scalar product $
\bp_j'\ba_i
$.
This quantity compares the part of unit \(j\)'s outcome that is predictable from unit \(j\)'s covariates with the part of unit \(i\)'s outcome that remains after removing unit \(i\)'s own covariates and fixed effect. Under the postulated $\mathbb{H}_{0}$ (and cross-sectional independence of the disturbances), there should be no systematic association between these two components for all \(i\neq j\).

The numerator of our test statistic is the sum over all such pairs
\begin{equation}
\label{eq:pairwise_score}
Q_N
=
\frac{1}{N}
\sum_{i=1}^N
\sum_{j\neq i}
\bp_j'\ba_i.
\end{equation}
As we show below, the normalization by \(N\) (and not by \(N(N-1)\)) is sufficient for the fixed-\(T\), large-\(N\) asymptotic framework we consider. Since \(\bp_i=\bB\bP_i \by_i\), \(\ba_i=\bB\widetilde\bM_i \by_i\), \(\bB'\bB = \bI_T-T^{-1}\biota_T\biota_T'\), and \(\bP_i\widetilde\bM_i=0\), we have
\begin{equation}
\bp_i'\ba_i
=
\by_i'\bP_i\bB'\bB\widetilde\bM_i \by_i=
\by_i'\bP_i\widetilde\bM_i \by_i
=
0.
\end{equation}
Therefore \eqref{eq:pairwise_score} can equivalently be written as
\begin{equation}
\label{eq:compact_score}
Q_N
=
\left(
\frac{1}{\sqrt N}\sum_{j=1}^N \bp_j
\right)'
\left(
\frac{1}{\sqrt N}\sum_{i=1}^N \ba_i
\right).
\end{equation}
This representation is useful both computationally and theoretically. It shows that the statistic depends on two (scaled) cross-sectional averages: the average projected outcome and the average residualized outcome.

Let
\begin{equation}
\widehat{\bh}_N
:=
\frac{1}{\sqrt N}\sum_{i=1}^N \bp_i,
\qquad
\widehat\bz_{a,N}
:=
\frac{1}{\sqrt N}\sum_{i=1}^N \ba_i.
\end{equation}
To standardize \(Q_N\), define
\begin{equation}
\label{eq:Omegaaa_hat}
\widehat\bOmega_{aa,N}
:=
\frac{1}{N}\sum_{i=1}^N
\ba_i\ba_i',
\qquad
\widehat\bGamma_N
:=
\frac{1}{N}\sum_{i=1}^N
\bp_i\ba_i'.
\end{equation}
The baseline statistic is then given by
\begin{equation}
\label{eq:RN_orig}
R_{N,\text{orig}}
=
\frac{
\widehat\bh_N'\widehat\bz_{a,N}
}{
\sqrt{
\operatorname{tr}\left(\widehat\bGamma_N^2\right)
+\widehat\bh_N'
\widehat\bOmega_{aa,N}
\widehat\bh_N
}
},
\end{equation}
whenever the denominator is positive.

Alternatively to the instrument-based interpretation given above, the statistic has a cross-sectional dependence (CD) test statistic interpretation. Standard CD test statistics (e.g. \citealp{breusch1980lagrange,pesaran2015testing}) often aggregate residual covariances (or correlations) across units, which would amount to analyzing the cross-product $\ba_j'\ba_i$. The present statistic instead aggregates covariances between projected outcomes and residualized outcomes. This distinction makes the test directional toward peer effects: it emphasizes cross-unit dependence that is related to observed covariates of potential peers. Dependence that is present only in residual-residual covariance is not the sole object of the test. Hence, unlike the said CD statistics, our approach is impossible without regressors \(\bX_i\).

In finite samples and in some covariance configurations, the term $\operatorname{tr}\left(\widehat\bGamma_N^2\right)$ appearing in the denominator need not be positive. We therefore also consider two alternative normalizations that guarantee positivity,
\begin{equation}
\label{eq:RN_plus}
R_{N,+}
:=
\frac{
\widehat\bh_N'\widehat\bz_{a,N}
}{
\sqrt{
\max\left\{\operatorname{tr}\left(\widehat\bGamma_{N}^2\right),0\right\}
+
\widehat\bh_N'\widehat\bOmega_{aa,N}\widehat\bh_N
}
},
\end{equation}
and
\begin{equation}
\label{eq:RN_sym}
R_{N,\mathrm{sym}}
:=
\frac{
\widehat\bh_N'\widehat\bz_{a,N}
}{
\sqrt{
\operatorname{tr}\left(\widehat\bGamma_{N}\widehat\bGamma_{N}'\right)
+
\widehat\bh_N'\widehat\bOmega_{aa,N}\widehat\bh_N
}
}.
\end{equation}
In the remainder, we occasionally write $R_{N,\star}$ where $\star\in\{\text{orig};+;\text{sym}\}$ to refer to the estimators jointly. The family of statistics $R_{N,\star}$ then takes the form
\begin{equation}
\label{eq:RN_star}
R_{N,\star}
:=
\frac{
\widehat\bh_N'\widehat\bz_{a,N}
}{
\sqrt{b_\star\left(\widehat\bGamma_{N}\right)+
\widehat\bh_N'\widehat\bOmega_{aa,N}\widehat\bh_N
}
},
\end{equation}
with the function $b_\star(\cdot)$ defined accordingly.

\section{Theoretical results}
\label{sec:theory}
\subsection{Asymptotic distribution}
\label{subsec:limit_distribution}

The results in this paper are derived under the following set of high-level regularity conditions.
\begin{assumption}
\label{ass:high_level}
The following conditions hold under the null hypothesis when $N\to\infty$:

\begin{enumerate}
\item[(i)] For some fixed \(\bh_0\in\mathbb R^{T_a}\) with \(\|\bh_0\|<\infty\)
\[
\begin{pmatrix}
\widehat\bh_N\\
\widehat\bz_{a,N}
\end{pmatrix}
\Rightarrow
\begin{pmatrix}
\bh\\
\bz_a
\end{pmatrix}
=
\begin{pmatrix}
\bh_0+\bz_p\\
\bz_a
\end{pmatrix},\qquad
\begin{pmatrix}
\bz_p\\
\bz_a
\end{pmatrix}
\sim
\mathcal{N} \left(
\mathbf \vzeros,
\begin{pmatrix}
\bOmega_{pp} & \bGamma\\
\bGamma' & \bOmega_{aa}
\end{pmatrix}
\right),
\]
where the variance covariance matrix is positive definite.
\item[(ii)] The second-moment estimators satisfy
$
\widehat\bOmega_{aa,N}\to_p \bOmega_{aa},
$ $\widehat\bGamma_N\to_p \bGamma.$

\item[(iii)] For the statistic under consideration, $
b_\star(\bGamma)+\bh'\bOmega_{aa}\bh>0$ $a.s.$

\end{enumerate}
\end{assumption}
Assumption~\ref{ass:high_level}(i) is the only central limit requirement used in the theory, where we adopt the local-to-zero asymptotic framework similar to weak IV asymptotic framework of \citet{staigerstock1997}. Assumption~\ref{ass:high_level}(ii) collects the probability limits of the \(N^{-1}\) objects. Assumption~\ref{ass:high_level}(iii) avoids singular cases and ensures that the conditional critical values are well defined.

In most cases \(\bGamma\) is given by the limiting covariance between the projected and residualized components:
\begin{equation}
\bGamma
=
\lim_{N\to\infty}
\frac{1}{N}\sum_{i=1}^N \E[\bp_i\ba_i'].
\end{equation}
 The value of \(\bGamma\) is important: it determines the dependence between the two Gaussian components \(\bz_p\) and \(\bz_a\). Nonzero covariance between these components is the source of the nonstandard conditional distributions studied below.

The following result follows immediately from the high-level convergence assumptions and the continuous mapping theorem.

\begin{theorem}
\label{thm:limit_statistic}
Suppose Assumption~\ref{ass:high_level} holds. Then, for each
\(\star\in\{\mathrm{orig};+;\mathrm{sym}\}\),
\begin{equation}
R_{N,\star}
\Rightarrow
R_\star(\bh_0),
\end{equation}
where
\begin{equation}
\label{eq:Rstar_limit}
R_\star(\bh_0)
=
\frac{
\bh'\bz_a
}{
\sqrt{
b_\star(\bGamma)+\bh'\bOmega_{aa}\bh
}
},
\qquad
\bh:=\bh_0+\bz_p.
\end{equation}
\end{theorem}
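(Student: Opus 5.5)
The plan is a joint weak-convergence argument followed by the continuous mapping theorem, with care at points where the denominator map is discontinuous or vanishes.

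\emph{Step 1: joint convergence.} Combine Assumption~\ref{ass:high_level}(i) and (ii). Since $\widehat\bOmega_{aa,N}$ and $\widehat\bGamma_N$ converge in probability to the constants $\bOmega_{aa}$ and $\bGamma$, Slutsky's lemma gives
\[
\bigl(\widehat\bh_N,\widehat\bz_{a,N},\widehat\bOmega_{aa,N},\widehat\bGamma_N\bigr)\Rightarrow\bigl(\bh,\bz_a,\bOmega_{aa},\bGamma\bigr),
\]
with $\bh=\bh_0+\bz_p$.

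\emph{Step 2: the mapping and its continuity.} Define, for $\star\in\{\mathrm{orig};+;\mathrm{sym}\}$,
\[
g_\star(\bh,\bz,\bOmega,\bG)=\frac{\bh'\bz}{\sqrt{b_\star(\bG)+\bh'\bOmega\bh}}
\]
wherever $b_\star(\bG)+\bh'\bOmega\bh>0$, and set it to an arbitrary value (say $0$) elsewhere. The three choices of $b_\star$ are $\operatorname{tr}(\bG^2)$, $\max\{\operatorname{tr}(\bG^2),0\}$ and $\operatorname{tr}(\bG\bG')$. Each is continuous in $\bG$, being a polynomial composed with $\max\{\cdot,0\}$. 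Hence $g_\star$ is continuous on the open set
\[
D_\star=\{b_\star(\bG)+\bh'\bOmega\bh>0\},
\]
and its discontinuity set is contained in the complement of $D_\star$.

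\emph{Step 3: continuous mapping theorem.} By Assumption~\ref{ass:high_level}(iii), the limit point $(\bh,\bz_a,\bOmega_{aa},\bGamma)$ lies in $D_\star$ almost surely. The discontinuity set of $g_\star$ therefore has limit-measure zero. The continuous mapping theorem then gives $g_\star(\cdot_N)\Rightarrow R_\star(\bh_0)$.

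\emph{Step 4: the sample statistic may be undefined.} For $\star=\mathrm{orig}$ the statistic is defined only ``whenever the denominator is positive'', so it remains to check that $R_{N,\star}$ agrees with $g_\star$ evaluated at the sample quantities with probability approaching one. This follows from the portmanteau theorem. Since $D_\star$ is open and the limit lies in $D_\star$ almost surely,
\[
\liminf_{N\to\infty} P\bigl(\text{sample point}\in D_\star\bigr)\ge 1.
\]
So the event on which $R_{N,\star}$ is undefined, or differs from $g_\star$, has vanishing probability. Any convention on that event therefore does not affect the weak limit.

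\emph{Main obstacle.} The only delicate point is this degenerate-denominator issue, handled by the almost-sure positivity in Assumption~\ref{ass:high_level}(iii). The rest is routine.
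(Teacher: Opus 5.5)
Your proof is correct and follows the same route as the paper. Both arguments obtain joint convergence of $(\widehat\bh_N,\widehat\bz_{a,N},\widehat\bOmega_{aa,N},\widehat\bGamma_N)$ from Assumption~\ref{ass:high_level}(i)--(ii) and then apply the continuous mapping theorem using the almost-sure denominator positivity in (iii); yours additionally makes explicit the Slutsky step, the continuity of the map on the open set $D_\star$, and (via the portmanteau theorem) why the possibly undefined finite-sample $R_{N,\mathrm{orig}}$ does not affect the limit, points the paper leaves implicit.
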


\begin{proof}
By Assumption~\ref{ass:high_level}(i), $
(\widehat\bh_N,\widehat\bz_{a,N})
\Rightarrow
(\bh,\bz_a).$
By Assumption~\ref{ass:high_level}(ii),
$
\widehat\bOmega_{aa,N}\to_p \bOmega_{aa}$ and
$\widehat\bGamma_N\to_p \bGamma.
$
Hence
$
\widehat\bh_N'\widehat\bz_{a,N}
\Rightarrow
\bh'\bz_a,
$
and
$
b_\star(\widehat\bGamma_N)
+
\widehat\bh_N'\widehat\bOmega_{aa,N}\widehat\bh_N
\Rightarrow
b_\star(\bGamma)+\bh'\bOmega_{aa}\bh.
$
The result follows by the continuous mapping theorem and the denominator positivity in Assumption~\ref{ass:high_level}(iii).
\end{proof}

The limit in \eqref{eq:Rstar_limit} is not generally standard normal. The numerator contains the product of two correlated Gaussian vectors, and the denominator depends on the random vector \(\bh\). When the deterministic component \(\bh_0\) is large, the linear term dominates and the usual Gaussian approximation emerges. In weak cases, however, the randomness of \(\bh\) is of first order, and the distribution can differ substantially from \(\mathcal{N}(0,1)\).

The following proposition formalizes the case where $\bh_0$ is large.
\begin{proposition}
\label{prop:large_h}
Suppose Assumption~\ref{ass:high_level} holds and let \(
\|\bh_0\|\to\infty\). Then
\begin{equation}
R_\star(\bh_0)
=
\frac{\bh_0'\bz_a}{\sqrt{\bh_0'\bOmega_{aa}\bh_0}}
+
o_p(1).
\end{equation}
Consequently, $R_\star(\bh_0)\Rightarrow \mathcal{N}(0,1)
\quad
\text{as } \|\bh_0\|\to\infty.$
\end{proposition}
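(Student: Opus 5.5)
We work directly with the limit object $R_\star(\bh_0)$ from Theorem~\ref{thm:limit_statistic}, where $\bh=\bh_0+\bz_p$ and $(\bz_p,\bz_a)$ is a fixed Gaussian vector whose law does not depend on $\bh_0$. Write $r:=\|\bh_0\|$ and $\bd:=\bh_0/r$, a unit vector. The plan is to divide numerator and denominator of $R_\star(\bh_0)$ by $r$ and show that everything except the terms involving $\bd$ alone is $O_p(r^{-1})$. The numerator becomes
\[
r^{-1}\bh'\bz_a=\bd'\bz_a+r^{-1}\bz_p'\bz_a ,
\]
and since $\bz_p'\bz_a=O_p(1)$ (it has finite second moment because the Gaussian vectors have fixed covariance), the second term is $O_p(r^{-1})=o_p(1)$. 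The squared denominator becomes
\[
r^{-2}\bigl(b_\star(\bGamma)+\bh'\bOmega_{aa}\bh\bigr)=\bd'\bOmega_{aa}\bd+2r^{-1}\bd'\bOmega_{aa}\bz_p+r^{-2}\bigl(\bz_p'\bOmega_{aa}\bz_p+b_\star(\bGamma)\bigr),
\]
where $b_\star(\bGamma)$ is a fixed finite constant. Hence this equals $\bd'\bOmega_{aa}\bd+O_p(r^{-1})$.

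The key step is a uniform lower bound on the leading denominator term. Positive definiteness of the joint covariance in Assumption~\ref{ass:high_level}(i) implies that $\bOmega_{aa}$ is positive definite. Therefore $\bd'\bOmega_{aa}\bd\ge\lambda_{\min}(\bOmega_{aa})>0$ uniformly over unit vectors $\bd$, and also $\bd'\bOmega_{aa}\bd\le\lambda_{\max}(\bOmega_{aa})$. This is the only point that needs care. Because $\bd$ may change as $\|\bh_0\|\to\infty$, we must avoid relying on $\bd$ converging. The eigenvalue bounds give exactly the uniformity required. With them, the map $x\mapsto x^{-1/2}$ is Lipschitz on $[\lambda_{\min}/2,\infty)$, so
\[
\bigl(r^{-2}(b_\star(\bGamma)+\bh'\bOmega_{aa}\bh)\bigr)^{-1/2}=(\bd'\bOmega_{aa}\bd)^{-1/2}+O_p(r^{-1})
\]
on an event whose probability tends to one. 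Combining this with the numerator expansion and using $\bd'\bz_a=O_p(1)$, we get
\[
R_\star(\bh_0)=\frac{\bd'\bz_a}{\sqrt{\bd'\bOmega_{aa}\bd}}+o_p(1)=\frac{\bh_0'\bz_a}{\sqrt{\bh_0'\bOmega_{aa}\bh_0}}+o_p(1),
\]
since the ratio is invariant to scaling $\bd$ by $r$.

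Finally, $\bz_a\sim\mathcal N(\vzeros,\bOmega_{aa})$, so for every $\bh_0\neq\vzeros$ the leading term is exactly $\mathcal N(0,1)$. Slutsky's lemma then yields $R_\star(\bh_0)\Rightarrow\mathcal N(0,1)$. This argument holds for all three choices of $\star$, because $b_\star(\bGamma)$ enters only through the $O(r^{-2})$ term.
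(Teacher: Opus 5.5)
Your proof is correct and follows the paper's route. Both arguments expand numerator and denominator around $\bh_0$, show that the $\bz_p$-terms are of lower order ($O_p(1)$ in the numerator and $O_p(\|\bh_0\|)$ in the quadratic form), and conclude by Slutsky because $\bh_0'\bz_a/\sqrt{\bh_0'\bOmega_{aa}\bh_0}$ is exactly standard normal. Dividing by $\|\bh_0\|$ and using $\bh_0'\bOmega_{aa}\bh_0\ge\lambda_{\min}(\bOmega_{aa})\|\bh_0\|^2$ makes explicit a step the paper leaves implicit when it writes the denominator as $\sqrt{\bh_0'\bOmega_{aa}\bh_0}\{1+o_p(1)\}$.
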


\begin{proof}
Write \(\bh=\bh_0+\bz_p\). Since \(\bz_p=O_p(1)\) and \(\bz_a=O_p(1)\),
$
\bh'\bz_a
=
\bh_0'\bz_a+O_p(1).$
Similarly,
$
\bh'\bOmega_{aa}\bh
=
\bh_0'\bOmega_{aa}\bh_0+O_p(\|\bh_0\|),
$
and \(b_\star(\bGamma)=O(1)\). Therefore
$
\sqrt{b_\star(\bGamma)+\bh'\bOmega_{aa}\bh}
=
\sqrt{\bh_0'\bOmega_{aa}\bh_0}\{1+o_p(1)\}.
$
The first claim follows. Since \(\bh_0'\bz_a/\sqrt{\bh_0'\bOmega_{aa}\bh_0}\) is standard normal, the second claim follows.
\end{proof}

Proposition~\ref{prop:large_h} uses sequential limits where first $N\rightarrow\infty$ and then $\|\bh_0\|\rightarrow\infty$. It is well known in the panel data settings that such sequential limits may not be informative on the joint limit where $(N,\|\bh_0\|)\rightarrow\infty$ \citep{phillips1999linear}.

Sequential limit arguments can be avoided by assuming that
\begin{equation}
N^{\alpha-1/2}\widehat\bh_N=\bh_{0}+O_{P}(N^{\alpha-1/2}),
\end{equation}
for $\alpha\in[0,1/2)$. This assumption would be equivalent to the semi-strong IV setting as in \citet{ANTOINE2012350}. Under this assumption the conclusions of Proposition~\ref{prop:large_h} continue to hold.

An implication of Proposition~\ref{prop:large_h} and the subsequent discussion for general values of $\alpha$ is that the choice of \(b_\star(\cdot)\) is asymptotically irrelevant when $\alpha<1/2$. This implies that the original, positive-part, and symmetrized normalizations are asymptotically equivalent. The distinction between the three choices of \(b_\star(\cdot)\) matters only in weak projected-component regimes, where \(\widehat\bh_N=O_p(1)\) and the trace correction remains first order.

\subsection{Monte Carlo test}
\label{subsec:conditional_limit}

We now consider the case where \(\bh_0\) is moderate or close to zero. In that region, the statistic has a nonstandard distribution that depends on the nuisance parameters determining the local Gaussian experiment. Define
\begin{equation}
\label{eq:M_def}
\blm
:=
\bh-\bGamma\bOmega_{aa}^{-1}\bz_a.
\end{equation}
Since \(\bh=\bh_0+\bz_p\), the covariance between \(\bh\) and \(\bz_a\) is \(\bGamma\). Hence \(\blm\) is the residual from the linear projection of \(\bh\) on \(\bz_a\) in the Gaussian limit experiment.

\begin{theorem}
\label{thm:conditional_reduction}
Suppose Assumption~\ref{ass:high_level}(i)--(iii) hold. Then \(\blm\perp\bz_a\). Since
$
\bh:=\blm+\bGamma\bOmega_{aa}^{-1}\bz_a,
$
the distribution of \(R_\star(\bh_0)\) conditional on \(\blm\) is free of the local nuisance parameter \(\bh_0\).
\end{theorem}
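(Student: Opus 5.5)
The argument takes place entirely inside the Gaussian limit experiment of Assumption~\ref{ass:high_level}(i). Since $(\bz_p',\bz_a')'$ is jointly Gaussian with positive definite covariance, $\bOmega_{aa}$ is invertible and the vector $(\blm',\bz_a')'$ is a linear transformation of $(\bz_p',\bz_a')'$ plus the constant $\bh_0$:
\[
\begin{pmatrix}\blm\\ \bz_a\end{pmatrix}
=
\begin{pmatrix}\bh_0\\ \vzeros\end{pmatrix}
+
\begin{pmatrix}\bI & -\bGamma\bOmega_{aa}^{-1}\\ \mZeros & \bI\end{pmatrix}
\begin{pmatrix}\bz_p\\ \bz_a\end{pmatrix}.
\]
So $(\blm,\bz_a)$ is jointly Gaussian. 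I would compute the cross-covariance directly: $\operatorname{Cov}(\blm,\bz_a)=\operatorname{Cov}(\bz_p,\bz_a)-\bGamma\bOmega_{aa}^{-1}\operatorname{Var}(\bz_a)=\bGamma-\bGamma=\mZeros$. For jointly Gaussian vectors, zero covariance implies independence, which gives $\blm\perp\bz_a$. The marginal laws are $\bz_a\sim\mathcal N(\vzeros,\bOmega_{aa})$, which is free of $\bh_0$, and $\blm\sim\mathcal N(\bh_0,\bOmega_{pp}-\bGamma\bOmega_{aa}^{-1}\bGamma')$, which absorbs all dependence on $\bh_0$.

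For the second claim, substitute $\bh=\blm+\bGamma\bOmega_{aa}^{-1}\bz_a$ into \eqref{eq:Rstar_limit}. This writes $R_\star(\bh_0)=g_\star(\blm,\bz_a)$ with
\[
g_\star(\bm{m},\bz)
=
\frac{(\bm{m}+\bGamma\bOmega_{aa}^{-1}\bz)'\bz}{\sqrt{b_\star(\bGamma)+(\bm{m}+\bGamma\bOmega_{aa}^{-1}\bz)'\bOmega_{aa}(\bm{m}+\bGamma\bOmega_{aa}^{-1}\bz)}}.
\]
The map $g_\star$ depends only on $\bGamma$, $\bOmega_{aa}$ and $b_\star(\bGamma)$, not on $\bh_0$. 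It is measurable, and it is well defined wherever the denominator is positive, which holds a.s.\ by Assumption~\ref{ass:high_level}(iii). Because $\blm$ and $\bz_a$ are independent, a regular version of the conditional law of $R_\star(\bh_0)$ given $\blm=\bm{m}$ is the law of $g_\star(\bm{m},\bz_a)$ with $\bz_a\sim\mathcal N(\vzeros,\bOmega_{aa})$; this is the standard substitution/freezing lemma for independent variables. That law is therefore free of $\bh_0$.

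The only delicate point is that Assumption~\ref{ass:high_level}(iii) gives positivity of the denominator almost surely jointly, rather than for every fixed $\bm{m}$. By Fubini, for $P_{\blm}$-almost every $\bm{m}$ the denominator is positive for $\bz_a$-a.e.\ value, so the conditional law is defined for almost every conditioning value, which is all a conditional distribution requires. The cases $\star=+$ and $\star=\mathrm{sym}$ make $b_\star(\bGamma)\ge0$ and $\bOmega_{aa}$ is positive definite, so positivity there fails only on the null set where $\bh=\vzeros$ and $b_\star(\bGamma)=0$.
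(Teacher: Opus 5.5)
Your proposal is correct and follows the paper's proof. Both arguments use joint Gaussianity of $(\blm,\bz_a)$ and the covariance computation $\bGamma-\bGamma\bOmega_{aa}^{-1}\bOmega_{aa}=\mZeros$, then substitute $\bh=\blm+\bGamma\bOmega_{aa}^{-1}\bz_a$ so that, given $\blm$, the statistic is an $\bh_0$-free function of $\bz_a\sim\mathcal N(\vzeros,\bOmega_{aa})$. Your explicit freezing-lemma step and the Fubini remark on almost-sure positivity of the denominator are careful additions that the paper leaves implicit.
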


\begin{proof}
Because \((\bh,\bz_a)\) is jointly Gaussian, it is enough to show that \(\blm\) and \(\bz_a\) are uncorrelated. We have
$
\operatorname{Cov}(\blm,\bz_a)
=
\operatorname{Cov}(\bh,\bz_a)
-
\bGamma\bOmega_{aa}^{-1}\operatorname{Var}(\bz_a)
=
\bGamma-\bGamma\bOmega_{aa}^{-1}\bOmega_{aa}
=
\vzeros.$
Thus \(\blm\perp\bz_a\). Conditional on \(\blm=\blm_0\), and using \eqref{eq:M_def}, the statistic can be written as
\[
R_\star(\bh_0)
=R_\star(\blm_0)
=
\frac{
\{\blm_0+\bGamma\bOmega_{aa}^{-1}\bz_a\}'\bz_a
}{
\sqrt{
b_\star(\bGamma)
+
\{\blm_0+\bGamma\bOmega_{aa}^{-1}\bz_a\}'\bOmega_{aa}
\{\blm_0+\bGamma\bOmega_{aa}^{-1}\bz_a\}
}
},
\]
where \(\bz_a\mid \blm=\blm_0\sim \mathcal{N}(\mathbf 0,\bOmega_{aa})\).
\end{proof}

Theorem~\ref{thm:conditional_reduction} is the main theoretical result for feasible inference using our approach. The limiting statistic \eqref{eq:Rstar_limit} may have a complicated unconditional distribution, but the conditional distribution (given \(\blm\)) is free of the local nuisance parameter $\bh_0$. This suggests using simulated conditional critical values rather than standard Gaussian critical values. The conditional test statistic in Theorem~\ref{thm:conditional_reduction} is infeasible because it depends on the unknown matrices \(\bGamma\) and \(\bOmega_{aa}\). A feasible version is provided in Algorithm \ref{alg:feasible_plugin}.

\begin{algorithm}[htbp]
\caption{Feasible plug-in conditional Monte Carlo test}
\label{alg:feasible_plugin}
\begin{algorithmic}[1]
\State Compute the conditioning statistic:
\[
\widehat\blm_N
:=
\widehat\bh_N
-
\widehat\bGamma_N
\widehat\bOmega_{aa,N}^{-1}
\widehat\bz_{a,N}.
\]

\For{$b=1,\ldots,B$}
    \State Draw independently
    \[
    \bz_{a,b}^\ast \sim \mathcal{N}(\vzeros,\widehat\bOmega_{aa,N}).
    \]
    \State Set
    \[
    \bh_b^\ast
    :=
    \widehat\blm_N
    +
    \widehat\bGamma_N\widehat\bOmega_{aa,N}^{-1}\bz_{a,b}^\ast.
    \]
    \State Compute
    \[
    R_{\star,b}^\ast
    =
    \frac{
    (\bh_b^\ast)'\bz_{a,b}^\ast
    }{
    \sqrt{
    b_\star(\widehat\bGamma_N)
    +
    (\bh_b^\ast)'\widehat\bOmega_{aa,N}\bh_b^\ast
    }
    }.
    \]
\EndFor

\State Let \(\widehat c_{\alpha,\star}(\widehat\blm_N)\) be the empirical \((1-\alpha)\)-quantile of
\[
\{|R_{\star,b}^\ast|: b=1,\ldots,B\}.
\]
\If{$|R_{N,\star}|>\widehat c_{\alpha,\star}(\widehat\blm_N)$}
    \State Reject $H_{0}$.
\Else
    \State Do not reject $H_{0}$.
\EndIf
\end{algorithmic}
\end{algorithm}
The validity of the test conducted through this algorithm is given by the following theorem.
\begin{theorem}
\label{thm:feasible_cmc}
Suppose Assumption~\ref{ass:high_level} holds and \(B=B_N\to\infty\). Then, for each fixed \(\bh_0\in\mathbb R^{T_a}\) and any $\alpha \in (0,1)$,
\begin{equation}
\limsup_{N\to\infty}
P_{\bh_0}\left(
|R_{N,\star}|>\widehat c_{\alpha,\star}(\widehat\blm_N)
\right)
= \alpha.
\end{equation}
If Assumption~\ref{ass:high_level} also holds uniformly over a set
\(\mathcal H\subseteq\mathbb R^{T_a}\), then
\begin{equation}
\limsup_{N\to\infty}
\sup_{\bh_0\in\mathcal H}
P_{\bh_0}\left(
|R_{N,\star}|>\widehat c_{\alpha,\star}(\widehat\blm_N)
\right)
= \alpha.
\end{equation}
\end{theorem}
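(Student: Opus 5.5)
The plan is to write the limiting test as a conditional test on the Gaussian limit experiment of Theorem~\ref{thm:conditional_reduction}, and then show that the feasible plug-in critical value converges to the infeasible conditional quantile.

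\textbf{Step 1: the infeasible conditional test.} For fixed $\blm_0$ and matrices $(\bGamma,\bOmega_{aa})$, define
\[
F(r;\blm_0,\bGamma,\bOmega_{aa}) := P\big(|R_\star(\blm_0)|\le r\big),
\]
where the probability is over $\bz_a\sim\mathcal N(\vzeros,\bOmega_{aa})$, and let $c_{\alpha,\star}(\blm_0)$ be its $(1-\alpha)$ quantile. By Theorem~\ref{thm:conditional_reduction}, conditional on $\blm=\blm_0$ the limit $|R_\star(\bh_0)|$ has exactly the distribution $F(\cdot;\blm_0,\bGamma,\bOmega_{aa})$, independently of $\bh_0$.

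I would check that this conditional law is continuous, i.e.\ has no atoms. The numerator $\{\blm_0+\bGamma\bOmega_{aa}^{-1}\bz_a\}'\bz_a$ is a nondegenerate quadratic-plus-linear polynomial in a Gaussian vector, so its level sets have Lebesgue measure zero. The same then holds for the ratio, by Assumption~\ref{ass:high_level}(iii). Continuity gives $P(|R_\star(\blm_0)|>c_{\alpha,\star}(\blm_0)\mid\blm_0)=\alpha$ exactly, and integrating over $\blm$ gives unconditional size $\alpha$. I also need the quantile to be a continuous function of $(\blm_0,\bGamma,\bOmega_{aa})$. This follows from continuity of $F$ in its arguments (dominated convergence) together with strict increase of $F$ at the quantile, since the law has positive density on an interval around $c_{\alpha,\star}$.

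\textbf{Step 2: joint convergence and the plug-in quantile.} By Assumption~\ref{ass:high_level}(i)--(ii) and the continuous mapping theorem, $(R_{N,\star},\widehat\blm_N,\widehat\bGamma_N,\widehat\bOmega_{aa,N})\Rightarrow(R_\star(\bh_0),\blm,\bGamma,\bOmega_{aa})$ jointly. This uses invertibility of $\bOmega_{aa}$ and the same denominator argument as in Theorem~\ref{thm:limit_statistic}.

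Conditional on the data, the draws $|R^\ast_{\star,b}|$ are i.i.d.\ with CDF $F(\cdot;\widehat\blm_N,\widehat\bGamma_N,\widehat\bOmega_{aa,N})$. With $B_N\to\infty$, a conditional Glivenko--Cantelli/DKW bound, uniform in the conditioning values, shows the empirical quantile satisfies $\widehat c_{\alpha,\star}-c_{\alpha,\star}(\widehat\blm_N;\widehat\bGamma_N,\widehat\bOmega_{aa,N})\to_p0$. Continuity of the quantile map from Step~1 then gives
\[
(R_{N,\star},\widehat c_{\alpha,\star})\Rightarrow\big(R_\star(\bh_0),c_{\alpha,\star}(\blm)\big).
\]

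\textbf{Step 3: conclusion and the uniform statement.} The limit law of $|R_\star(\bh_0)|-c_{\alpha,\star}(\blm)$ has no atom at $0$ by Step~1, so the portmanteau theorem yields
\[
P_{\bh_0}\big(|R_{N,\star}|>\widehat c_{\alpha,\star}\big)\to\alpha,
\]
hence the $\limsup$ equals $\alpha$.

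For the uniform statement I would argue by subsequences. Take a sequence $\bh_{0,N}\in\mathcal H$ that attains the $\limsup$ of the supremum. If $\bh_{0,N}$ stays bounded, pass to a further subsequence converging to some $\bh_0^\ast$ and apply the pointwise argument along it; this is where the uniform version of the assumption is used. If $\|\bh_{0,N}\|\to\infty$, the conditional distribution still has exact size $\alpha$ for every $\blm$, so the limit size equals $\alpha$ along that sequence as well.

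\textbf{Main obstacle.} I expect the hardest step to be the unbounded case, together with the uniformity of quantile continuity when $\blm$ is large. There I would rescale: divide numerator and denominator by $\|\blm\|$, and show that the conditional law converges to $\mathcal N(0,1)$ in absolute value, uniformly in large $\|\blm\|$, in the manner of Proposition~\ref{prop:large_h}. This makes both the conditional quantile and the sampling distribution converge to their Gaussian counterparts, so the size converges to $\alpha$.
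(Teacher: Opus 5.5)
Your proposal is correct and follows the same route as the paper's proof: joint convergence of $(R_{N,\star},\widehat\blm_N,\widehat\bGamma_N,\widehat\bOmega_{aa,N})$, convergence of the simulated conditional law and its empirical quantile via $B_N\to\infty$ and continuity of the conditional law, and exact conditional size from Theorem~\ref{thm:conditional_reduction}, with your DKW step and subsequence argument (including the rescaled $\|\bh_{0,N}\|\to\infty$ case) spelling out what the paper compresses into ``the same argument under uniform convergence.'' One small precision: when $\bGamma\bOmega_{aa}^{-1}$ is skew-symmetric, as in the paper's skew design, the numerator is identically zero at $\blm_0=\vzeros$, so your no-atom claim holds only for almost every $\blm_0$, which suffices because $\operatorname{Var}(\blm)=\bOmega_{pp}-\bGamma\bOmega_{aa}^{-1}\bGamma'$ is positive definite.
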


\begin{proof}
By Assumption~\ref{ass:high_level},
$
(\widehat\bh_N,\widehat\bz_{a,N},
\widehat\bGamma_N,\widehat\bOmega_{aa,N})
\Rightarrow
(\bh,\bz_a,\bGamma,\bOmega_{aa}).
$
Therefore, $
\widehat\blm_N
=
\widehat\bh_N
-
\widehat\bGamma_N
\widehat\bOmega_{aa,N}^{-1}
\widehat\bz_{a,N}
\Rightarrow
\blm
$.
The simulated conditional law is the plug-in analogue of the conditional law in Theorem~\ref{thm:conditional_reduction}. Since
$
\widehat\bGamma_N\to_p\bGamma,
$, $
\widehat\bOmega_{aa,N}\to_p\bOmega_{aa}
$, and $\widehat\blm_N\Rightarrow \blm,
$
and since the denominator is positive in the limit, the simulated distribution converges weakly to the correct conditional distribution. The empirical quantile converges to the corresponding population conditional quantile because \(B_N\to\infty\) and the conditional distribution is continuous at all quantiles since its a smooth function of a Gaussian vector. The pointwise size result follows from Theorem~\ref{thm:conditional_reduction}. The uniform result follows by the same argument under uniform convergence.
\end{proof}

\begin{remark}
Note that positivity of the denominator of the test is a crucial requirement for the asymptotic approximation, and therefore the exact validity of the test. Even in the asymptotic limit, the positivity requirement may fail for the original statistic. If in this case, we append the test with the decision rule not to reject when positivity fails, the test will become conservative and the equality in \eqref{thm:feasible_cmc} should be replaced with an inequality.
\end{remark}
When \(\bGamma=\mathbf 0\), the conditional reduction becomes trivial. In that case, \(\blm=\bh\), and conditional on \(\blm\),
\begin{equation}
R_\star(\bh_0)
=
\frac{\blm'\bz_a}{\sqrt{\blm'\bOmega_{aa}\blm}}.
\end{equation}
Hence the conditional distribution is standard normal whenever \(\blm'\bOmega_{aa}\blm>0\) as formalized in the corollary below.

\begin{corollary}
\label{cor:gamma_zero}
Let $\|\blm\|>0$. If \(\bGamma=\mathbf 0\), then
$
R_\star(\bh_0)\mid \blm
\sim \mathcal{N}(0,1).
$
The conditional critical value equals the usual two-sided Gaussian critical value,
$
c_{\alpha,\star}(\blm)=z_{1-\alpha/2}.
$
\end{corollary}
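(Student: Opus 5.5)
The plan is to specialize the representation in Theorem~\ref{thm:conditional_reduction} to \(\bGamma=\mathbf 0\).

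\emph{Step 1.} With \(\bGamma=\mathbf 0\), the definition \eqref{eq:M_def} gives \(\blm=\bh\). Theorem~\ref{thm:conditional_reduction} gives \(\blm\perp\bz_a\). Hence, conditional on \(\blm=\blm_0\), the vector \(\bz_a\) remains distributed as \(\mathcal N(\vzeros,\bOmega_{aa})\).

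\emph{Step 2.} Next I would handle the denominator. For each \(\star\in\{\mathrm{orig};+;\mathrm{sym}\}\), the correction \(b_\star(\mathbf 0)\) equals zero. This is immediate for \(\operatorname{tr}(\mathbf 0^2)\), for \(\max\{\operatorname{tr}(\mathbf 0^2),0\}\), and for \(\operatorname{tr}(\mathbf 0\mathbf 0')\). The conditional statistic therefore reduces to
\[
R_\star(\bh_0)=\frac{\blm_0'\bz_a}{\sqrt{\blm_0'\bOmega_{aa}\blm_0}}.
\]
The assumption \(\|\blm\|>0\) makes the denominator strictly positive: \(\bOmega_{aa}\) is positive definite, being a diagonal block of the positive definite covariance in Assumption~\ref{ass:high_level}(i).

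\emph{Step 3.} Conditional on \(\blm_0\), the numerator \(\blm_0'\bz_a\) is a fixed linear combination of a Gaussian vector. It is therefore \(\mathcal N(0,\blm_0'\bOmega_{aa}\blm_0)\), and dividing by its standard deviation gives \(\mathcal N(0,1)\). This law does not depend on \(\blm_0\), so it is also the conditional law given \(\blm\).

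\emph{Step 4.} The conditional critical value of the two-sided test based on \(|R_\star|\) is then the \((1-\alpha)\)-quantile of \(|Z|\) with \(Z\sim\mathcal N(0,1)\). That quantile is \(z_{1-\alpha/2}\).

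There is no real obstacle in this argument. The only points needing care are checking that \(b_\star(\mathbf 0)=0\) holds for all three normalizations, and noting that \(\|\blm\|>0\) is exactly what guarantees the denominator positivity required by Assumption~\ref{ass:high_level}(iii).
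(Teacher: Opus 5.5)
Your proposal is correct and follows the paper's own argument. Setting $\bGamma=\mathbf 0$ gives $\blm=\bh$, which is independent of $\bz_a$, and $b_\star(\mathbf 0)=0$ for all three normalizations, so conditionally $R_\star(\bh_0)=\blm'\bz_a/\sqrt{\blm'\bOmega_{aa}\blm}\sim\mathcal N(0,1)$ with critical value $z_{1-\alpha/2}$; your explicit checks of $b_\star(\mathbf 0)=0$ and of denominator positivity (from $\|\blm\|>0$ and $\bOmega_{aa}$ positive definite) simply fill in details the paper leaves implicit.
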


\subsection{Relation to identification-robust inference in linear IV}
An important observation is that the limiting experiment that arises for the proposed test statistic is identical to that of the score in the linear IV model. Consider for ease of exposition a linear IV model with single endogenous regressor and instruments \(\bz_i\in\mathbb R^K\),
\begin{equation}
y_i=x_i\beta+u_i,\qquad x_i=\bz_i'\bpi_N + v_i.
\end{equation}
For testing \(H_0:\beta=\beta_0\),
define the sample moment
\begin{equation}
\bg_N(b)
=
\frac{1}{\sqrt N}\sum_{i=1}^N \bz_i(y_i-x_i b)
=
\frac{1}{\sqrt N}\bZ'(\by-\bx b).
\end{equation}
This corresponds to the residualized vector $\widehat{\bz}_{a,N}$.
The (negative) sample Jacobian of the moment with respect to \(\beta\) is
\begin{equation}
\bj_N:=-
\frac{\partial \bg_N(b)}{\partial b}\bigg|_{b=\beta_0}
=
\frac{1}{\sqrt N}\bZ'\bx.
\end{equation}
Under weak IV asymptotics, $\bpi_N = \bpi_0/\sqrt{N}$, so $\bH_N=\bQ_N\bpi_0 + \frac{1}{\sqrt{N}}\bZ'\bv$, for $\bQ_N:=\frac1N \bZ'\bZ.$  Hence, $\bQ_N^{-1}\bj_N = \bpi_0 + \frac{1}{\sqrt{N}}\bQ_N^{-1}\bZ'\bv$, which corresponds to the projected outcomes in $\widehat{\bh}_{N}$.

Analogous to Assumption~\ref{ass:high_level}(i), suppose that, under \(\mathbb{H}_0\),
\[
\begin{pmatrix}
\bj_N\\
\bg_N(\beta_{0})
\end{pmatrix}
\Rightarrow
\begin{pmatrix}
\bj\\
\bg
\end{pmatrix}
=
\begin{pmatrix}
\bj_0+\bk\\
\bg
\end{pmatrix},
\qquad \begin{pmatrix}
\bk\\
\bg
\end{pmatrix}
\sim
\mathcal{N}\left(
\bzero,
\begin{pmatrix}
\bOmega_{jj} & \bGamma_{jg}\\
\bGamma_{jg}' & \bOmega_{gg}
\end{pmatrix}
\right),
\]
 where the variance-covariance matrix again is positive definite.
Define
\begin{equation}
\bd
:=
\bj-\bGamma_{jg}\bOmega_{gg}^{-1}\bg.
\end{equation}
Then \(\bd\perp \bg\)
in the Gaussian limit experiment, just like $\blm\perp \bz_a$ in the network case. The standard normal behavior of the conditional test with $\bGamma=\bO$ is now completely expected, as this corresponds to the IV case where there is no endogeneity.

Evidently, the analogy with IV framework, immediately suggests a different network statistic that is pivotal. For example, the score statistic of \citet{kleibergen2002pivotal} in our context is given by
\begin{equation}
K_{N,\text{network}} = \left(\frac{\widehat\blm_N'\widehat{\bOmega}_{aa,N}^{-1}\widehat{\bz}_{a,N}}{\sqrt{\widehat\blm_{N}'\widehat{\bOmega}_{aa,N}^{-1}\widehat\blm_{N}}}\right)^{2}.
\end{equation}
Such that under Assumption \ref{ass:high_level}
\begin{equation}
K_{N,\text{network}}\Rightarrow\chi^2_1.
\end{equation}

We take no stance on whether the conditional (also related to \citealp{moreira2003conditional}) or the score test should be preferred. The pivotal nature of the score is attractive, but in terms of interpretability of the test statistic in a network setting, the original version is likely preferred. In terms of power, neither dominates the other as we show in a limiting experiment in Appendix~\ref{app:compareCondKlei}.
\subsection{Numerical illustrations for \texorpdfstring{$T_a=2$}{Ta=2}}
\label{subsec:canonical_2x2}

The conditional representation is exact for any fixed \(T_a\), but its implications are easiest to see when \(T_a=2\). In this subsection we assume \(T_a=2\) and \(\bOmega_{pp}=\bOmega_{aa}=\bI_2\) such that the covariance matrix \(\bGamma\) solely determines the geometry of the conditional distribution. In what follows, we consider two specifications for \(\bGamma\) that best illustrate features of our proposed conditional approach.

\begin{figure}[t]
\caption{Skew \(2\times2\) design.}
\label{fig:skew_results}
\centering
\begin{subfigure}{0.48\textwidth}
    \caption{Conditional critical values versus \(m=\|\blm\|\).}
    \label{fig:skew_cv}
    \centering
    \includegraphics[width=\linewidth]{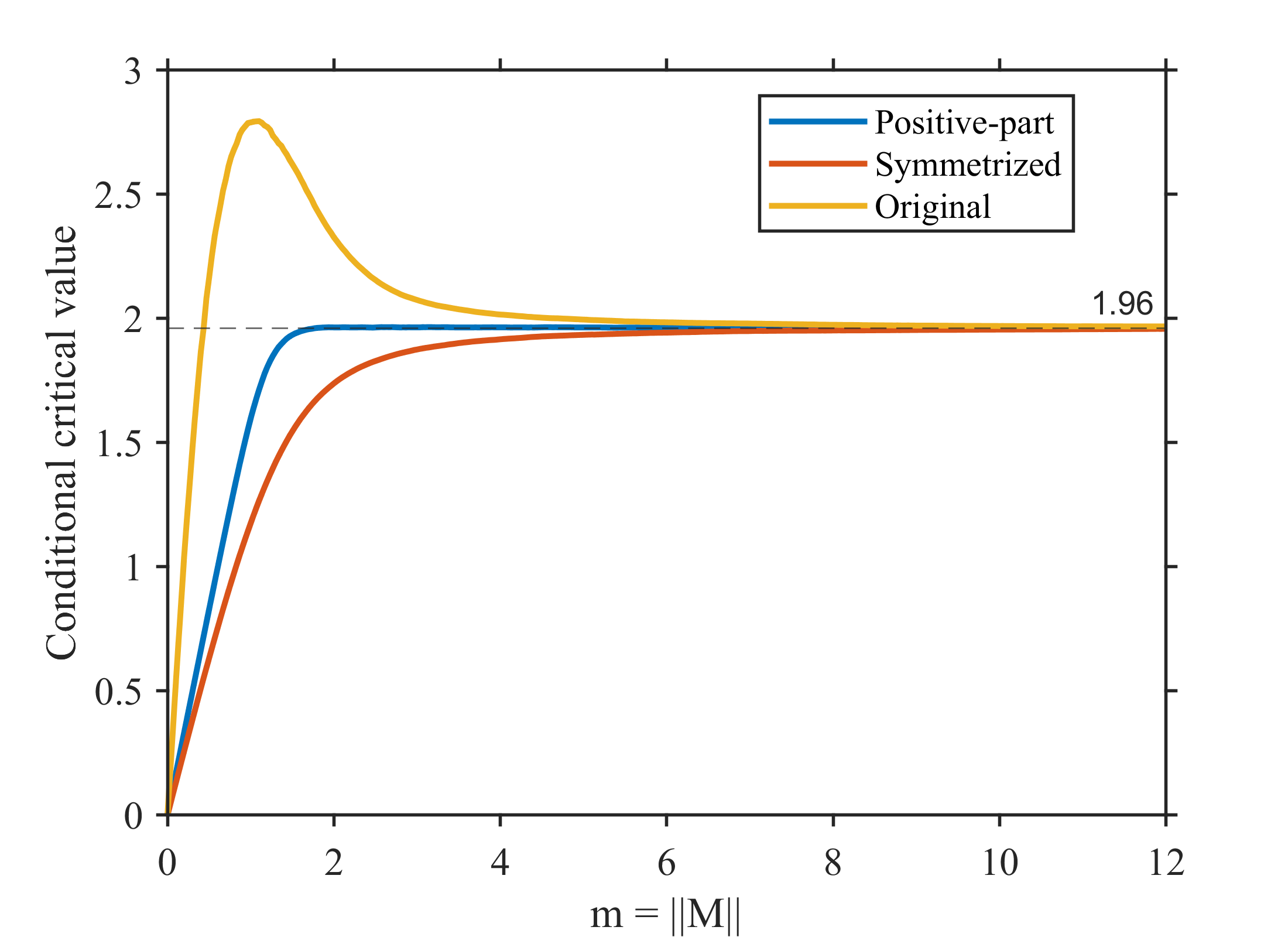}
\end{subfigure}
\hfill
\begin{subfigure}{0.48\textwidth}
    \caption{Worst-case rejection over the \(\bh_0\)-grid.}
    \label{fig:skew_worst}
    \centering
    \includegraphics[width=\linewidth]{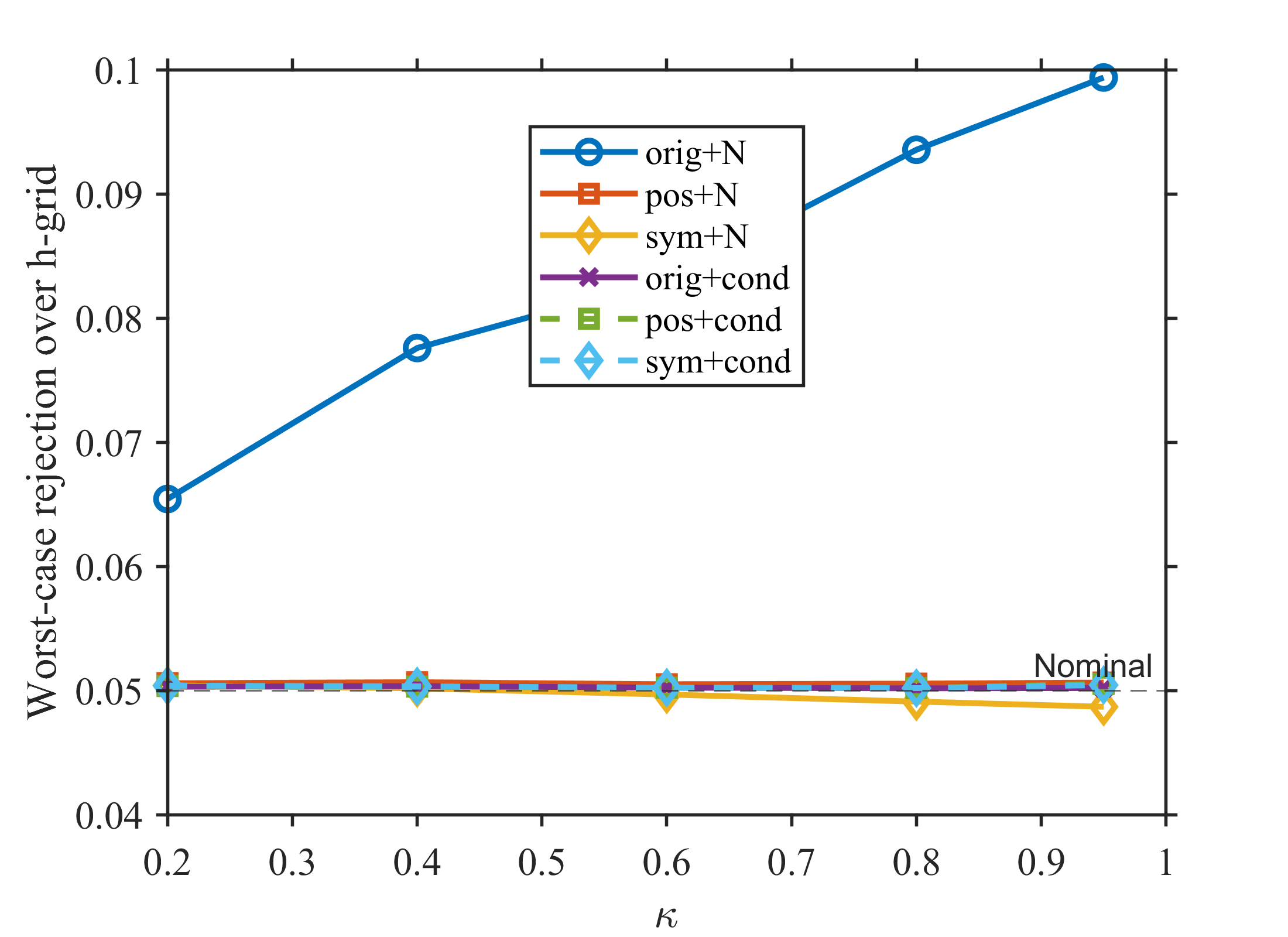}
\end{subfigure}
\begin{minipage}{0.92\textwidth}
\small \textit{Note}: The figure reports results for the skew covariance design \(\bGamma=\kappa\bK\). The conditional critical value depends only on \(m=\|\blm\|\), starts at zero when \(m=0\), and approaches the Gaussian benchmark as \(m\) increases. The rejection plot reports worst-case rejection probabilities over the grid for \(\bh_0\).
\end{minipage}
\end{figure}
\paragraph{Skew case.}
Suppose
\begin{equation}
\bGamma=\kappa\bK,
\qquad
\bK=
\begin{pmatrix}
0&1\\
-1&0
\end{pmatrix}.
\end{equation}
Because \(\bK\) is invariant under rotations, the conditional law depends on \(\blm\) only through \(m=\|\blm\|\). Rotate coordinates so that \(\blm=(m,0)'\), and write \(\bz_a=(z_{1},z_{2})'\). Then
\begin{equation}
\bh
=
\blm+\kappa\bK\bz_a
=
\begin{pmatrix}
m+\kappa z_{2}\\
-\kappa z_{1}
\end{pmatrix}.
\end{equation}
Therefore, $
\bh'\bz_a
=
(m+\kappa z_{2})z_{1}-\kappa z_{1}z_{2}
=
mX.
$
Thus the skew component drops out of the numerator. It affects the statistic only through the denominator.

For the three normalizations,
$
b_{\mathrm{orig}}(\bGamma)
=
\operatorname{tr}(\bGamma^2)
=
-2\kappa^2,$ $
b_+(\bGamma)=0,$
and $
b_{\mathrm{sym}}(\bGamma)
=
\operatorname{tr}(\bGamma\bGamma')
=
2\kappa^2.
$
We can conclude that the original, positive-part, and symmetrized statistics can behave differently in skew designs. In particular, the original denominator may be too small, while the symmetrized denominator can be substantially larger.

At \(m=0\), the numerator is identically zero, i.e., \(\bh'\bz_a=0\) a.s. Hence the exact conditional critical value collapses to zero for the positive-part and symmetrized statistics. As \(m\) increases, the conditional distribution moves back toward the standard Gaussian benchmark.

To illustrate these arguments numerically, Figure~\ref{fig:skew_results} reports the conditional critical value as a function of \(m\) and the worst-case rejection probability over the grid for \(\bh_0\). The conditional critical value starts at zero and approaches the Gaussian benchmark as \(m\) increases. The original Gaussian-calibrated statistic overrejects as \(\kappa\) grows, while the conditionally calibrated positive-part and symmetrized statistics remain close to nominal size.

\begin{figure}[t]
\caption{Symmetric trace-zero \(2\times2\) design.}
\label{fig:symmtrace_results}
\centering
\begin{subfigure}{0.48\textwidth}
     \caption{Slices at \(\psi=0\) and \(\psi=\pi/4\).}
    \label{fig:symmtrace_slices}
   \centering
    \includegraphics[width=\linewidth]{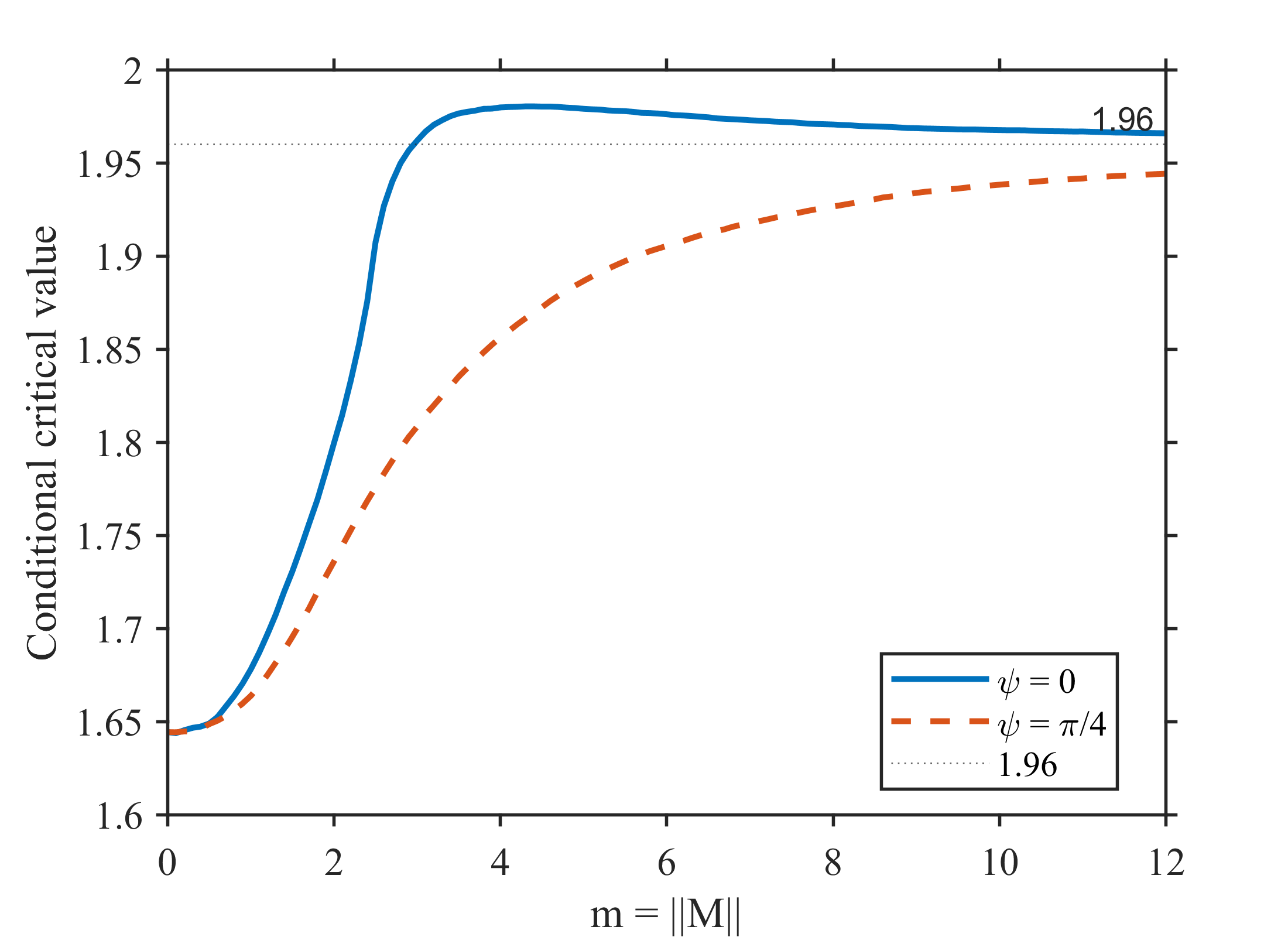}
\end{subfigure}
\hfill
\begin{subfigure}{0.48\textwidth}
     \caption{Worst-case rejection over the \(\bh_0\)-grid.}
    \label{fig:symmtrace_worst}
   \centering
    \includegraphics[width=\linewidth]{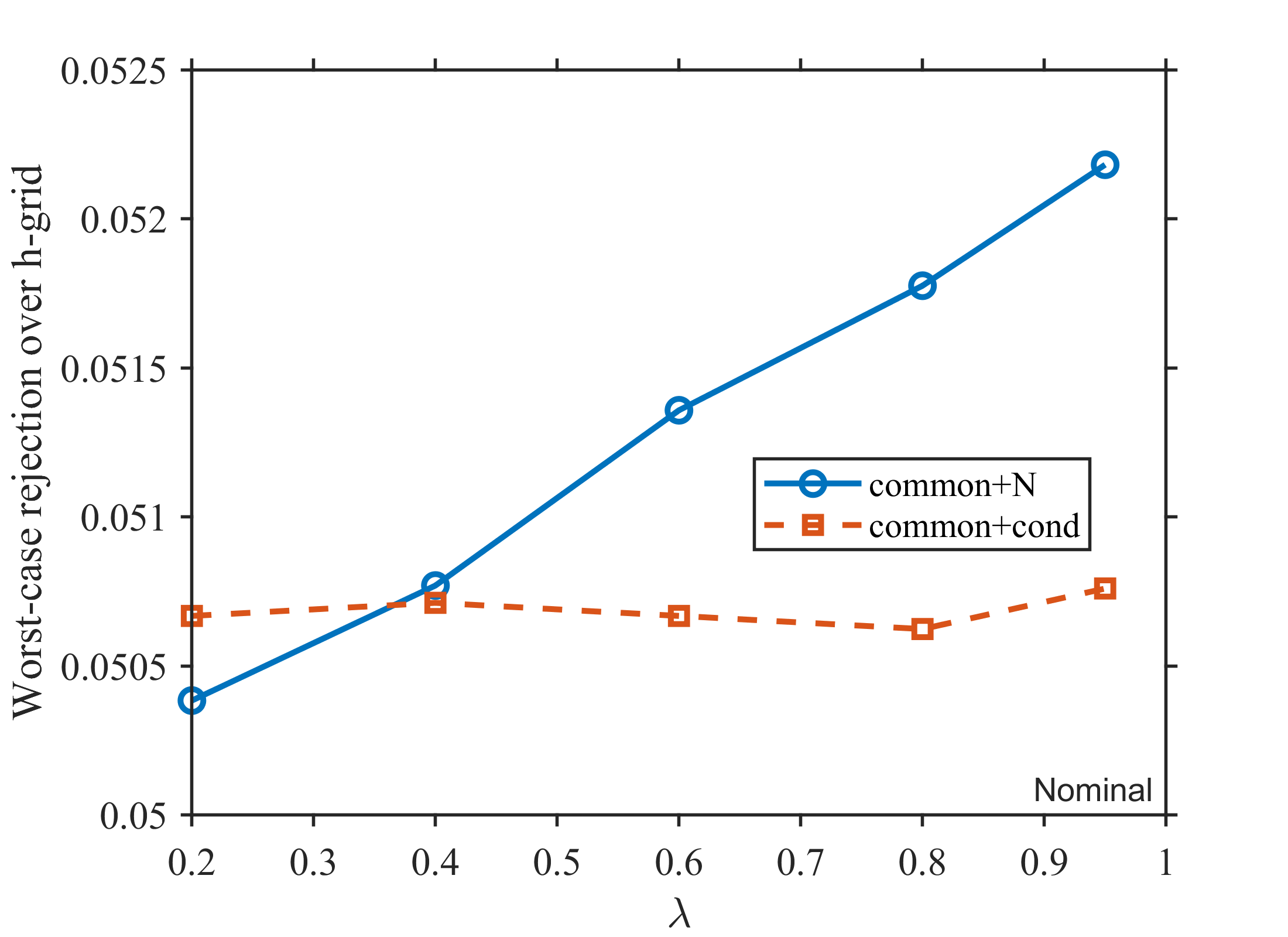}
\end{subfigure}
\begin{minipage}{0.92\textwidth}
\small \textit{Note}: The figure reports results for the symmetric trace-zero design. The conditional critical value depends on both \(m=\|\blm\|\) and the orientation \(\psi\) of \(\blm\) relative to the eigenbasis of \(\bGamma\).
\end{minipage}
\end{figure}

\paragraph{Symmetric trace-zero case.}
Suppose
\begin{equation}
\bGamma=\lambda\bD_\varphi,
\qquad
\bD_\varphi
=
\bR(\varphi)
\begin{pmatrix}
1&0\\
0&-1
\end{pmatrix}
\bR(\varphi)',\qquad  \bR(\varphi)
=
\begin{pmatrix}
\cos\varphi & -\sin\varphi\\
\sin\varphi & \cos\varphi
\end{pmatrix}.
\end{equation}
Note that in this example \(\bGamma\) is symmetric and
$
\operatorname{tr}(\bGamma^2)
=
\operatorname{tr}(\bGamma\bGamma')
=
2\lambda^2.
$
Thus
$
b_{\mathrm{orig}}(\bGamma)
=
b_+(\bGamma)
=
b_{\mathrm{sym}}(\bGamma)
=
2\lambda^2.
$
Therefore, in this design the three denominator corrections coincide.

Let \(\bu_1=(\cos \phi, \sin \phi)'\) and \(\bu_2=(-\sin \phi,\cos \phi)'\) denote the eigenvectors of \(\bGamma\), so that
\(
\bGamma \bu_1=\lambda \bu_1
\)
and
\(
\bGamma \bu_2=-\lambda \bu_2.
\)
Write
\(
\blm = m(\cos\psi\,\bu_1+\sin\psi\,\bu_2)\), and
\(m=\|\blm\|.
\)
Then \(\psi\) is the angle of \(\blm\) relative to the eigenbasis of \(\bGamma\).
 Conditional on \(\blm\),
\(
\bh'\bz_a
=
\blm'\bz_a+\bz_a'\bGamma\bz_a = m(\cos\psi z_1 + \sin\psi z_2) + \lambda(z_1^2-z_2^2).
\)
When \(m=\|\blm\|\) is small, the quadratic term \(\bz_a'\bGamma\bz_a\) is important. When \(m\) is large, the linear term \(\blm'\bz_a\) dominates. The transition between these two regimes depends on the angle \(\psi\) between \(\blm\) and the eigenbasis of \(\bGamma\). Thus the conditional critical value is generally a function of both \(m\) and \(\psi\).
The left panel of Figure~\ref{fig:symmtrace_results} shows this transition.

The conditional critical value can be nonmonotone in \(m\), reflecting the movement from a quadratic-form regime to a regular Gaussian regime. As the conditional critical values are lower then the standard normal critical values, we expect worse case rejection over $\bh_0$ to only be mildly above the nominal level. Indeed, the worst-case rejection rates shown in the right-panel over a grid of $\lambda$ shows that size distortions are very mild.

\section{Extension: Time effects}
\subsection{Weighted test statistic}
\label{subsec:time_effects_weighted}
Our analysis so far has assumed that under $\mathbb{H}_{0}$ all cross-sectional units are independent. This is generally a  restrictive assumption that is difficult to satisfy in practice. In empirical analysis, it is customary to saturate the model with additive common time components \(\bd\in\mathbb R^T\).

In the presence of common time effects, the model in $\mathbb{H}_{0}$ takes the form
\begin{equation}
\by_i=\bX_i\bbeta_i+\lambda_i\biota_T+\bd+\bu_i.
\end{equation}
As a result,
\begin{equation}
\ba_i=\bB\widetilde\bM_i \by_i
=\bB
\widetilde\bM_i \bd+\bB\widetilde\bM_i \bu_i
\end{equation}
is no longer centered at zero. Consequently, the unweighted aggregate $
\sum_{i=1}^N \ba_i$
can contain a common time-effect component and the proposed tests are not valid.

Empirical researchers usually simply treat time effects as parameters to be estimated (jointly with other parameters). Unfortunately, for your purpose this is not desirable, as estimating the time effects creates cross-sectional dependence in the residuals, see e.g., \citet{juodis2022incidental}. We therefore consider a weighted variant of the statistic that remove the time effects on the \(\ba_i\)-side of \eqref{eq:compact_score}. Let
\begin{equation}
Q_{w,N}
=
\left(\frac{1}{\sqrt{N}}\sum_{j=1}^N \bp_j\right)'
\left(\frac{1}{\sqrt{N}}\sum_{i=1}^N w_i \ba_i\right),
\end{equation}
where the weights \(w_i\) for $i=1,\ldots,N$ are assumed to be measurable with respect to the regressors only. The purpose of the weights is to eliminate common time effects from the aggregated residualized component. We impose the exact balancing condition\footnote{This restriction is equivalent to $\sum_{i=1}^N w_i\bB\widetilde\bM_i\bB'=\mZeros$.}
\begin{equation}
\label{eq:exact_a_balance}
\sum_{i=1}^N w_i\bB\widetilde\bM_i=\mZeros.
\end{equation}
Then, for every deterministic common time vector \(\bd\in\mathbb R^T\),
\begin{equation}
\sum_{i=1}^N w_i\bB\widetilde\bM_i \bd=\vzeros_{T_a}.
\end{equation}
Thus the weighted aggregate \(\sum_i w_i \ba_i\) is free of any additive common time effect under $\mathbb{H}_{0}$. A similar idea is employed by \citet{juodis2022incidental}, who choose the weights to be Rademacher random variables. In this paper, we leverage on the availability of strictly exogenous regressors, that allows for a construction of the weights fully described by these regressors. Define $
\widehat\bz_{a,w,N}
:=
\frac{1}{\sqrt N}\sum_{i=1}^N w_i \ba_i.
$
The numerator of the weighted test statistic is now
\begin{equation}
  Q_{w,N}=
\widehat\bh_N'\widehat\bz_{a,w,N}.
\end{equation}

Using the equivalent reduced-space restriction, define
$\bL_i:=\bB\widetilde\bM_i\bB'$. Since $\bL_i$ is symmetric, the restriction in \eqref{eq:exact_a_balance} can be written as a finite-dimensional linear system. Define
\begin{equation}
\bC:=
\bigl[
\operatorname{vech}(\bL_1),
\ldots,
\operatorname{vech}(\bL_N)
\bigr],
\end{equation}
where $\operatorname{vech}(\cdot)$ is the half-vectorization operator, see e.g., \citet{magnus}. Then \eqref{eq:exact_a_balance} is equivalent to $\bC\bw=\vzeros_{T_a(T_a+1)/2}$ for $\bw:=(w_{1},\ldots,w_{N})'$. Now let
\begin{equation}\label{eq:s}
\bs:=(s_1,\ldots,s_N)',
\end{equation}
be a regressor-based score chosen to proxy exposure to the alternative. We choose \(\bw\) as the normalized projection of \(\bs\) onto \(\operatorname{Null}(\bC)\). This enforces exact balancing by construction, while the score selects a direction within the balancing null space and therefore affects power. Furthermore, we impose
\begin{equation}
\frac{1}{N}\sum_{i=1}^N w_i^2=1,
\end{equation}
as a normalization.\footnote{\citet{Hsiao02102022} also put forward estimation method that makes use of unit-weights \(\bw\). Our approach differs from theirs as we do not estimate weights jointly with $\bbeta_{i}$. As a result, our approach is limited to additive time effects, while their approach also allows for interactive fixed effects.}

The denominator of the weighted statistic is based on
\begin{equation}
\widehat\bGamma_{w,N}
:=
\frac{1}{N}\sum_{i=1}^N \bp_i \ba_i^{\circ\prime},
\qquad
\widehat\bOmega_{aa,w,N}
:=
\frac{1}{N}\sum_{i=1}^N \ba_i^\circ \ba_i^{\circ\prime}.
\end{equation}
Here $\ba_i^{\circ}$ are adjusted (residualized) versions of $\ba_i$ that take into account that while the exact balancing condition removes common time effects from the aggregated numerator, it does not by itself remove their contribution from the quadratic objects used in the denominator. We discuss this adjustment in Section \ref{subsec:denominator} in detail.

For \(\star\in\{\mathrm{orig};+;\mathrm{sym}\}\), define
\begin{equation}
R_{N,a,\star}
:=
\frac{
\widehat\bh_N'\widehat\bz_{a,w,N}
}{
\sqrt{
b_\star(\widehat\bGamma_{w,N})
+
\widehat\bh_N'\widehat\bOmega_{aa,w,N}\widehat\bh_N
}
}.
\end{equation}
The statistic \(R_{N,a,\star}\) is the \(a\)-side weighted analogue of \(R_{N,\star}\). It is designed for settings in which unrestricted common time effects are a concern. The unweighted statistic remains the baseline procedure when such time effects are absent or explicitly modeled by observed time controls.
\subsection{Asymptotic analysis}
\label{subsec:weighted_theory}

We now state the corresponding high-level result for the weighted statistic. The same conditional argument applies once the time-effect nuisance has been removed from the weighted aggregate and from the denominator objects.

The following assumption is the weighted analogue of Assumption~\ref{ass:high_level}.

\begin{assumption}
\label{ass:weighted_high_level}
Under the null with an arbitrary additive common time effect \(\bd\in\mathbb R^T\), the following conditions hold.

\begin{enumerate}
\item[(i)] The weights \(\bw \in\mathbb R^N\) are measurable with respect to the regressors and satisfy
$
\sum_{i=1}^N w_i\bB\widetilde\bM_i=0,$ and
$ \max_{1\le i\le N}|w_i|=o_p(\sqrt N).$

\item[(ii)] For each fixed \(\bh_0\in\mathbb R^{T_a}\),
\[
\begin{pmatrix}
\widehat\bh_N\\
\widehat\bz_{a,w,N}
\end{pmatrix}
\Rightarrow
\begin{pmatrix}
\bh\\
\bz_{a,w}
\end{pmatrix}
=
\begin{pmatrix}
\bh_0+\bz_p\\
\bz_{a,w}
\end{pmatrix},\qquad
\begin{pmatrix}
\bz_p\\
\bz_{a,w}
\end{pmatrix}
\sim
\mathcal{N}\left(
\mathbf 0,
\begin{pmatrix}
\bOmega_{pp,w} & \bGamma_w\\
\bGamma_w' & \bOmega_{aa,w}
\end{pmatrix}
\right),
\]
where the variance-covariance matrix is positive definite.

\item[(iii)] The adjusted denominator objects satisfy
$
\widehat\bGamma_{w,N}\to_p\bGamma_w,$ $
\widehat\bOmega_{aa,w,N}\to_p\bOmega_{aa,w}.$
\item[(iv)] For the statistic under consideration, $
b_\star(\bGamma_w)+\bh'\bOmega_{aa,w}\bh>0$ $a.s.$
\end{enumerate}
\end{assumption}

The exact balancing condition in Assumption~\ref{ass:weighted_high_level}(i) removes the common time effect from the aggregated weighted \(a\)-component:
\begin{equation}
\sum_{i=1}^N w_i\ba_i
=
\sum_{i=1}^N w_i\bB\widetilde\bM_i\bu_i
\end{equation}
under the null.

\begin{theorem}
\label{thm:weighted_limit}
Suppose Assumption~\ref{ass:weighted_high_level} holds. For each
\(\star\in\{\mathrm{orig};+;\mathrm{sym}\}\),
\begin{equation}
R_{N,a,\star}
\Rightarrow
R_{a,\star}(\bh_0),
\end{equation} where
\begin{equation}
R_{a,\star}(\bh_0)
:=
\frac{
\bh'\bz_{a,w}
}{
\sqrt{
b_\star(\bGamma_w)+\bh'\bOmega_{aa,w}\bh
}
}.
\end{equation}
Moreover, defining $
\blm_a
:=
\bh-\bGamma_w\bOmega_{aa,w}^{-1}\bz_{a,w},
$
we have \(\blm_a\perp\bz_{a,w}\). Conditional on \(\blm_a\), the distribution of \(R_{a,\star}(\bh_0)\) is free of the local nuisance parameter \(\bh_0\).
\end{theorem}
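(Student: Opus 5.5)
The plan is to reduce Theorem~\ref{thm:weighted_limit} to the arguments already used for Theorem~\ref{thm:limit_statistic} and Theorem~\ref{thm:conditional_reduction}, with the weighted objects replacing the unweighted ones. The key preliminary observation is that the time effect \(\bd\) never enters the numerator. By Assumption~\ref{ass:weighted_high_level}(i), the weights are regressor-measurable and satisfy \(\sum_i w_i\bB\widetilde\bM_i=\mZeros\). Hence \(\widehat\bz_{a,w,N}=N^{-1/2}\sum_i w_i\bB\widetilde\bM_i\bu_i\) identically in \(\bd\). The condition \(\max_i|w_i|=o_p(\sqrt N)\) is a Lindeberg-type requirement that makes the joint convergence in Assumption~\ref{ass:weighted_high_level}(ii) plausible. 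For the proof, however, I would simply take (ii) as given, since it already delivers the joint Gaussian limit of \((\widehat\bh_N,\widehat\bz_{a,w,N})\) with covariance blocks \(\bOmega_{pp,w},\bGamma_w,\bOmega_{aa,w}\).

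For the first claim, I would combine (ii) with (iii). Because the limits \(\bGamma_w\) and \(\bOmega_{aa,w}\) are constants, Slutsky's lemma gives joint convergence of \((\widehat\bh_N,\widehat\bz_{a,w,N},\widehat\bGamma_{w,N},\widehat\bOmega_{aa,w,N})\) to \((\bh,\bz_{a,w},\bGamma_w,\bOmega_{aa,w})\). Consider the map
\[
(\bh,\bz,\bG,\bO)\mapsto \frac{\bh'\bz}{\sqrt{b_\star(\bG)+\bh'\bO\bh}}.
\]
I would check that this map is continuous at every point where the denominator is positive. For the three choices, \(b_\star\) is one of \(\operatorname{tr}(\bG^2)\), \(\max\{\operatorname{tr}(\bG^2),0\}\) or \(\operatorname{tr}(\bG\bG')\), and each of these is continuous. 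Assumption~\ref{ass:weighted_high_level}(iv) says the limit lies in the continuity set almost surely. The continuous mapping theorem then yields \(R_{N,a,\star}\Rightarrow R_{a,\star}(\bh_0)\), exactly as in the proof of Theorem~\ref{thm:limit_statistic}.

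For the conditional claim, I would repeat the proof of Theorem~\ref{thm:conditional_reduction}. First, \((\bh,\bz_{a,w})\) is jointly Gaussian. Second, \(\bOmega_{aa,w}\) is invertible because the full covariance matrix is positive definite. Third, \(\operatorname{Cov}(\bh,\bz_{a,w})=\bGamma_w\), since \(\bh_0\) is deterministic. These give
\[
\operatorname{Cov}(\blm_a,\bz_{a,w})=\bGamma_w-\bGamma_w\bOmega_{aa,w}^{-1}\bOmega_{aa,w}=\mZeros,
\]
and zero covariance is equivalent to independence under joint Gaussianity. I would then substitute \(\bh=\blm_a+\bGamma_w\bOmega_{aa,w}^{-1}\bz_{a,w}\) into \(R_{a,\star}(\bh_0)\). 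Conditional on \(\blm_a=\blm_0\), the statistic becomes a fixed function of \(\bz_{a,w}\), \(\blm_0\), \(\bGamma_w\) and \(\bOmega_{aa,w}\). Independence gives \(\bz_{a,w}\mid\blm_a=\blm_0\sim\mathcal N(\vzeros,\bOmega_{aa,w})\), which does not involve \(\bh_0\). The parameter \(\bh_0\) enters only through the mean of \(\blm_a\), on which we condition.

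The main obstacle is not in these steps, which are mechanical once Assumption~\ref{ass:weighted_high_level} is granted. It lies in making sure nothing else depends on \(\bd\). Two points need checking. The covariance limits \(\bGamma_w\) and \(\bOmega_{aa,w}\) must be free of \(\bd\); this is what the adjusted \(\ba_i^\circ\) in (iii) is designed to ensure. Also, \(\bGamma_w\) must be the correct cross-covariance between \(\bz_p\) and the weighted \(\bz_{a,w}\); I would record this as part of (ii) and (iii) rather than prove it here. If one wanted the invariance in \(\bd\) to be fully explicit, I would note that the numerator is exactly \(\bd\)-free by balancing. The denominator objects then converge to \(\bd\)-free limits by (iii), so the whole limit experiment, and hence the conditional law, is the same for every \(\bd\).
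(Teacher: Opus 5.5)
Your proposal is correct and follows the paper's own reasoning. The paper gives no separate proof. It only says the result is the same continuous-mapping step as in Theorem~\ref{thm:limit_statistic} plus the same Gaussian covariance computation as in Theorem~\ref{thm:conditional_reduction}, applied to the balanced, $\bd$-free weighted aggregate, and that is exactly your argument. One small imprecision in your closing side remark: $\bh_0$ itself can depend on $\bd$ through the $\bB\bP_i\bd$ term in $\bp_i$, so the full limit experiment need not be identical across $\bd$; this is harmless, because the conditional law is free of $\bh_0$.
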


The feasible conditional Monte Carlo implementation is the direct weighted analogue of the unweighted procedure. Define
\begin{equation}
\widehat\blm_{N,a}
:=
\widehat\bh_N
-
\widehat\bGamma_{w,N}
\widehat\bOmega_{aa,w,N}^{-1}
\widehat\bz_{a,w,N}.
\end{equation}
Conditional on the sample, simulate
\begin{equation}
\bz_a^\ast\sim\mathcal{N}(\vzeros,\widehat\bOmega_{aa,w,N}),
\qquad
\bh^\ast
:=
\widehat\blm_{N,a}
+
\widehat\bGamma_{w,N}
\widehat\bOmega_{aa,w,N}^{-1}\bz_a^\ast,
\end{equation}
and compute
\begin{equation}
R_{a,\star}^\ast
=
\frac{
(\bh^\ast)'\bz_a^\ast
}{
\sqrt{
b_\star(\widehat\bGamma_{w,N})
+
(\bh^\ast)'\widehat\bOmega_{aa,w,N}\bh^\ast
}
}.
\end{equation}
The empirical \((1-\alpha)\)-quantile of \(|R_{a,\star}^\ast|\) is used as the conditional critical value. Under Assumption~\ref{ass:weighted_high_level} and with the number of simulation draws diverging, the resulting weighted conditional test has asymptotic size at most \(\alpha\).

This result shows that the weighted statistic does not require a new inferential principle. It is the same conditional Gaussian limit experiment applied to a nuisance-free weighted aggregate. The only additional step is the construction of weights and adjusted denominator objects that remove common time effects.
\subsection{Denominator adjustment}
\label{subsec:denominator}
Below we discuss how the denominator of our test statistic should be adjusted to account for the presence of time effects in the model.

Consider the naive estimator
\begin{equation}
\widehat\bOmega_{aa,w,N}
=
\frac{1}{N}\sum_{i=1}^N w_{i}^{2}\ba_i \ba_i^{\prime}.
\end{equation}
By construction
\begin{align}
\widehat\bOmega_{aa,w,N}&=\frac{1}{N}\sum_{i=1}^N w_{i}^{2}\bB\widetilde{\bM}_{i}\bu_i\bu_i'\widetilde{\bM}_i\bB'\notag\\
&+\frac{1}{N}\sum_{i=1}^N w_{i}^{2}\bB\widetilde{\bM}_{i}\bd\bu_i'\widetilde{\bM}_i\bB'+\frac{1}{N}\sum_{i=1}^N w_{i}^{2}\bB\widetilde{\bM}_{i}\bu_i\bd'\widetilde{\bM}_i\bB
'\notag\\
&+\frac{1}{N}\sum_{i=1}^N w_{i}^{2}\bB\widetilde{\bM}_{i}\bd\bd'\widetilde{\bM}_i\bB'.
\end{align}
The term in the first line is the only one that asymptotically determines the variance of $\widehat\bz_{a,w,N}$, as the other terms are nuisance. And while the terms in the second line are asymptotically negligible (given exogeneity of regressors), the final term is not expected to be negligible. Therefore the denominator needs to be adjusted separately.

Although estimating time effects in the numerator would mechanically induce cross-sectional dependence, the time effects can be removed from the quadratic objects appearing in the denominator using a pooled estimate of the common time component in the $\ba_i$-objects.

 Any common time-effect vector admits the decomposition
\begin{equation}
\bd=c\biota_T+\bB'\btheta_d,
\qquad
c:=T^{-1}\biota_T'\bd,
\qquad
\btheta_d:=\bB\bd.
\end{equation}
We have $\ba_i=\bL_i\btheta_d + \bB\widetilde{\bM}_i\bu_i$, where $\bL_i=\bB\widetilde{\bM}_i\bB'$. In the more general case with additional common deterministic regressors, decompose $\bd=\bS\bgamma + \bB'\btheta_d$ where $\bS$ contains the deterministic regressors removed by $\widetilde{\bM}_i$.

The nuisance parameter $\btheta_d$ can therefore be estimated from the pooled weighted (or, alternatively, ordinary) least-squares regression\footnote{In the benchmark case in which $\widetilde\bM_i\bB'=\bB'$ for all $i$, so that $\bL_i=\bI$, the normalization $N^{-1}\sum_{i=1}^N w_i^2=1$ gives
$
\widehat\btheta_d
=
\frac{1}{N}\sum_{i=1}^N w_i^2\ba_i.
$ Thus, in this case, $\widehat\btheta_d$ is simply the weighted cross-sectional average of the time contrasts $\ba_i$.} of $\ba_i$ on $\bL_i$, with weights $w_i^2$:
\begin{equation}\label{eq:pooled}
\widehat\btheta_d
=
\left(\sum_{i=1}^N w_i^2\bL_i'\bL_i\right)^{-1}
\left(\sum_{i=1}^N w_i^2\bL_i'\ba_i\right)=\left(\sum_{i=1}^N w_i^2\bL_i\right)^{-1}
\left(\sum_{i=1}^N w_i^2\bL_i\ba_i\right),
\end{equation}
where the last equality uses symmetry and idempotency of $\bL_i$.

Define the residual after subtracting the estimated common time component as
\begin{equation}
\ba_i^\circ
:=
w_i\left(\ba_i-\bL_i\widehat\btheta_d\right)
=
w_i\bB\widetilde\bM_i\bu_i
-
w_i\bL_i\left(\widehat\btheta_d-\btheta_d\right).
\end{equation}
The denominator adjustment therefore removes from each weighted $\ba_i$ the estimated common time component. We make the following sufficient assumptions that ensure $\sqrt{N}$-consistency of $\hat{\btheta}_d$.

\begin{assumption}
\label{ass:pooled_time_effect}
Let
\(
\bX = (\bX_1,\ldots,\bX_N).
\)
The following conditions hold as $N\to\infty$:

\begin{enumerate}

\item[(i)] The disturbances satisfy
\(
\E[\bu_i\mid\bX]=\vzeros_T,
\)
are conditionally independent across $i$ given $\bX$, and
\(
\max_{1\leq i\leq N}
\E[\|\bu_i\|^2\mid\bX]
=
O_p(1).
\)

\item[(ii)] The weights are functions only of $\bX$ and satisfy
\(
\frac{1}{N}\sum_{i=1}^N w_i^4=O_p(1).
\)

\item[(iii)] For some positive definite matrix $\bG_L$,
\[
\frac{1}{N}\sum_{i=1}^N w_i^2\bL_i
\to_p
\bG_L,
\qquad
\lambda_{\min}(\bG_L)>0.
\]
\end{enumerate}
\end{assumption}
\begin{proposition}
Under Assumption~\ref{ass:pooled_time_effect},
\begin{equation}
\widehat\btheta_d-\btheta_d
=
O_p\left(N^{-1/2}\right),
\end{equation}
such that
\begin{align}
\frac{1}{N}\sum_{i=1}^N
\ba_i^\circ\ba_i^{\circ\prime}
&=
\frac{1}{N}\sum_{i=1}^N
w_i^2\bB\widetilde\bM_i\bu_i\bu_i'
\widetilde\bM_i\bB'
+
O_p\left(N^{-1}\right).
\end{align}
\end{proposition}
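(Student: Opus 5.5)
Write $\bG_N:=N^{-1}\sum_{i=1}^N w_i^2\bL_i$ and $\bv_i:=\bB\widetilde\bM_i\bu_i$. The plan is to first obtain an exact error representation for $\widehat\btheta_d$, then bound the two pieces of the quadratic expansion.

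\emph{Error representation for $\widehat\btheta_d$.} Under the null with common time effects, $\ba_i=\bL_i\btheta_d+\bv_i$. Since $\bL_i$ is symmetric and idempotent ($\bB\widetilde\bM_i\bB'\bB\widetilde\bM_i\bB'=\bB\widetilde\bM_i(\bI_T-T^{-1}\biota_T\biota_T')\widetilde\bM_i\bB'=\bL_i$, using $\widetilde\bM_i\biota_T=\vzeros$), we have $\bL_i\ba_i=\bL_i\btheta_d+\bL_i\bv_i$. Moreover $\bL_i\bv_i=\bv_i$ by the same identity, so \eqref{eq:pooled} gives
\[
\widehat\btheta_d-\btheta_d=\bG_N^{-1}\,\frac{1}{N}\sum_{i=1}^N w_i^2\bv_i .
\]
By Assumption~\ref{ass:pooled_time_effect}(iii), $\bG_N\to_p\bG_L$ with $\lambda_{\min}(\bG_L)>0$. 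Hence $\bG_N^{-1}=O_p(1)$ with probability approaching one.

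\emph{Rate for the score.} Condition on $\bX$, so that the $w_i$ and $\bB\widetilde\bM_i$ are fixed. The summands $w_i^2\bv_i$ are then conditionally independent with mean zero by Assumption~\ref{ass:pooled_time_effect}(i). Because $\|\bB\widetilde\bM_i\|\le1$,
\[
\E\Bigl[\bigl\|N^{-1}\textstyle\sum_i w_i^2\bv_i\bigr\|^2\,\Big|\,\bX\Bigr]\le N^{-2}\textstyle\sum_i w_i^4\,\E[\|\bu_i\|^2\mid\bX]\le N^{-1}\Bigl(N^{-1}\textstyle\sum_i w_i^4\Bigr)\max_i\E[\|\bu_i\|^2\mid\bX]=O_p(N^{-1}),
\]
using Assumption~\ref{ass:pooled_time_effect}(ii). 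A conditional Chebyshev (Markov) argument then converts this into $N^{-1}\sum_i w_i^2\bv_i=O_p(N^{-1/2})$, which gives the first claim. This conditioning step is where I expect the only real care to be needed, because the moment bounds are themselves $O_p(1)$ rather than deterministic. I would handle it by working on events of probability approaching one on which the conditional bounds hold, and applying Markov's inequality there.

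\emph{Quadratic expansion.} Let $\bdelta:=\widehat\btheta_d-\btheta_d$, so that $\ba_i^\circ=w_i\bv_i-w_i\bL_i\bdelta$. Then
\[
\frac1N\sum_i\ba_i^\circ\ba_i^{\circ\prime}=\frac1N\sum_i w_i^2\bv_i\bv_i'-\frac1N\sum_i w_i^2\bv_i\bdelta'\bL_i-\frac1N\sum_i w_i^2\bL_i\bdelta\bv_i'+\frac1N\sum_i w_i^2\bL_i\bdelta\bdelta'\bL_i .
\]
The last term is bounded in norm by $\|\bdelta\|^2N^{-1}\sum_iw_i^2$, and $N^{-1}\sum_iw_i^2=1$, so it is $O_p(N^{-1})$.

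For the cross terms, use $\bL_i\bv_i=\bv_i$. This gives
\[
\frac1N\sum_i w_i^2\bv_i\bdelta'\bL_i=\Bigl(\frac1N\sum_iw_i^2\bv_i\bv_i'\Bigr)\cdot 0+\dots
\]
which is not directly helpful. Instead, write the term explicitly as $\bigl(N^{-1}\sum_i w_i^2\bv_i\otimes\bL_i\bigr)$ acting on $\bdelta$. Each entry of $N^{-1}\sum_i w_i^2\bv_i\,(\bL_i)_{kl}$ is a conditionally mean-zero average with the same $O_p(N^{-1/2})$ bound as in the previous step, since $|(\bL_i)_{kl}|\le1$. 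Multiplying by $\|\bdelta\|=O_p(N^{-1/2})$ yields $O_p(N^{-1})$ for each cross term.

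Collecting the three remainder terms gives the stated expansion with an $O_p(N^{-1})$ error.
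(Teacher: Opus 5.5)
Your proof is correct. The paper omits this proof as ``standard'', and your argument supplies exactly that standard argument, in three steps:
\begin{itemize}
\item The idempotency identity $\bL_i\bB\widetilde\bM_i\bu_i=\bB\widetilde\bM_i\bu_i$ gives the exact representation $\widehat\btheta_d-\btheta_d=\bigl(N^{-1}\sum_i w_i^2\bL_i\bigr)^{-1}N^{-1}\sum_i w_i^2\bB\widetilde\bM_i\bu_i$.
\item A conditional second-moment bound, together with a careful Markov step, gives the $O_p(N^{-1/2})$ rate.
\item The cross terms reach $O_p(N^{-1})$ only because you use the conditional mean-zero structure with $\bX$-measurable $w_i$ and $\bL_i$. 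A crude norm bound would give just $O_p(N^{-1/2})$, and you correctly avoided it.
\end{itemize}
The only blemish is the abandoned display ending in ``$\cdot\,0+\dots$''. It is not a valid identity and should be deleted.
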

Thus we consistently estimate the covariance quantities associated with the nuisance-free weighted aggregate. The proof of this proposition is standard and thus omitted.

\section{Monte Carlo study}
\label{sec:monte_carlo}

We study the finite-sample behavior of the exact \(a\)-side weighted conditional test in a panel model with latent peer effects and an optional common time component.
\subsection{Data generating process}
For each replication, we generate a balanced panel with \(N\) units and \(T\) periods according to
\begin{equation}
\by_t
=
(\bI_N-\tau\bW)^{-1}
\bigl(
\balpha+\beta_x \bx_t+\gamma_{\mathrm{tr}}\delta_t\biota_N+\be_t
\bigr),
\qquad t=1,\dots,T,
\end{equation}
where \(\by_t=(y_{1t},\dots,y_{Nt})'\), \(\bx_t=(x_{1t},\dots,x_{Nt})'\), \(\balpha=(\alpha_1,\dots,\alpha_N)'\), and \(\be_t=(e_{1t},\dots,e_{Nt})'\).

For each unit \(i\), the regressor process is generated as $
x_{i1}
=
\mu_x + q_i + \frac{\xi_{i1}}{\sqrt{1-\rho_x^2}},
$
and, for \(t=2,\dots,T\),
$
x_{it}
=
\mu_x+\rho_x(x_{i,t-1}-\mu_x)+0.3\,q_i+\xi_{it},
$
where \(q_i \sim \mathcal{N}(0,1)\) and \(\xi_{it}\sim \mathcal{N}(0,1)\), independently across \(i\) and \(t\). The unit-specific intercepts satisfy \(\alpha_i \sim \mathcal{N}(0,\sigma_\alpha^2)\), \(i=1,\dots,N\). Finally, the error terms follow an AR(1) process within unit $
e_{i1}\sim \mathcal{N}(0,1),$
and, for \(t=2,\dots,T\),
$
e_{it}
=
\rho_e e_{i,t-1}
+
\sqrt{1-\rho_e^2}\,\varepsilon_{it}$, where $
\varepsilon_{it}\sim \mathcal{N}(0,1),
$
independently across \(i\) and \(t\).

The common time component is based on the centered trend
\begin{equation}
\delta_{t}^* = t-\frac{T+1}{2},
\qquad
\delta_t = \frac{\delta_{t}^*}{
\sqrt{\frac{1}{T}\sum_{s=1}^T (\delta_{s}^{*})^2}.}
\end{equation}
Hence \(\gamma_{\mathrm{tr}}=0\) corresponds to the case without a common time effect, while \(\gamma_{\mathrm{tr}}\neq 0\) introduces a common time component.

The scalar \(\tau\ge 0\) governs the strength of peer effects. So that under the null, \(\tau=0\),
\begin{equation}
\by_t=\balpha+\beta_x \bx_t+\gamma_{\mathrm{tr}}\delta_t\biota_N+\be_t.
\end{equation}

For the power analysis $\tau>0$, we need to specify the latent weighting matrix $\bW$.
For each unit \(i\), define $
c_i
=
z(\widehat{\mu}_{x}(i))
+
z(\widehat{\gamma}^{(1)}_{x}(i))
+
0.5\,z(\widehat{\sigma^{2}}_{x}(i)),
$
where $z(\cdot)$ indicates cross-sectional standardization to mean zero and variance one, and
\begin{equation}
\widehat{\mu}_{x}(i)=\frac{1}{T}\sum_{t=1}^T x_{it},\quad
\widehat{\sigma^{2}}_{x}(i)=\frac{1}{T-1}\sum_{t=1}^T (x_{it}-\bar x_i)^2,\quad
\widehat{\gamma}^{(1)}_{x}(i)
=
\frac{1}{T-1}\sum_{t=2}^T (x_{it}-\bar x_i)(x_{i,t-1}-\bar x_i).
\end{equation}
The units are partitioned into three latent groups according to terciles of \(c_i\). Let \(g_i\in\{1,2,3\}\) denote the group assignment. Conditional on \((g_i,g_j)\), the adjacency matrix \(\bA=(A_{ij})\) is generated as
\begin{equation}
A_{ij}=A_{ji}=1\{U_{ij}\le p_{ij}\},
\qquad
A_{ii}=0,
\end{equation}
where \(U_{ij}\sim U[0,1]\), and
\begin{equation}
p_{ij}
=
\begin{cases}
p_{\mathrm{in}}, & g_i=g_j,\\
p_{\mathrm{out}}, & g_i\neq g_j.
\end{cases}
\end{equation}
In the dense design, \(p_{\mathrm{in}}=0.7\) and \(p_{\mathrm{out}}=0.15\), while in the sparse design, \(p_{\mathrm{in}}=0.25\) and \(p_{\mathrm{out}}=0.02\). For the sparse design, if unit \(i\) has more than \(2k\) neighbors, only the \(2k\) neighbors with \(c_j\) closest to \(c_i\) are retained. Finally, the network matrix is scaled by its spectral radius:
\begin{equation}
\bW = \frac{\bA}{\rho(\bA)}.
\end{equation}
Finally, to select the weights, we choose $\bs$ from \eqref{eq:s} as
\begin{equation}
s_i=c_i
=
z(\widehat{\mu}_{x}(i))
+
z(\widehat{\gamma}^{(1)}_{x}(i))
+
0.5\,z(\widehat{\sigma^{2}}_{x}(i)),
\qquad
\bs=(s_1,\dots,s_N)'.
\end{equation}

Unless otherwise stated, the parameter values are \(\beta_x = 1\), \(\mu_x = 1\), \(\rho_x = 0.6\), \(\rho_e = 0.4\), \(\sigma_\alpha = 1\), and \(\gamma_{\mathrm{tr}} = 1\). The sparse-neighborhood trimming parameter is \(k=2\). The Monte Carlo design uses \(1000\) draws of the DGP, and critical values are calculated over \(2000\) draws. The rejection probability is evaluated over \(\tau \in \{0,0.2,0.4,0.6,0.8\}\). The panel dimensions considered are $N=250$ and $T=\{4,8\}$. Unreported results for $N=100$ show qualitatively the same patterns.

\begin{figure}[t]
    \caption{Simulation results: Power ($N=250$, $T=4$).}
    \label{fig:feasible_no_trend}
    \centering
    \begin{subfigure}{0.48\textwidth}
    \caption{No trend, dense}
    \includegraphics[width=\textwidth]{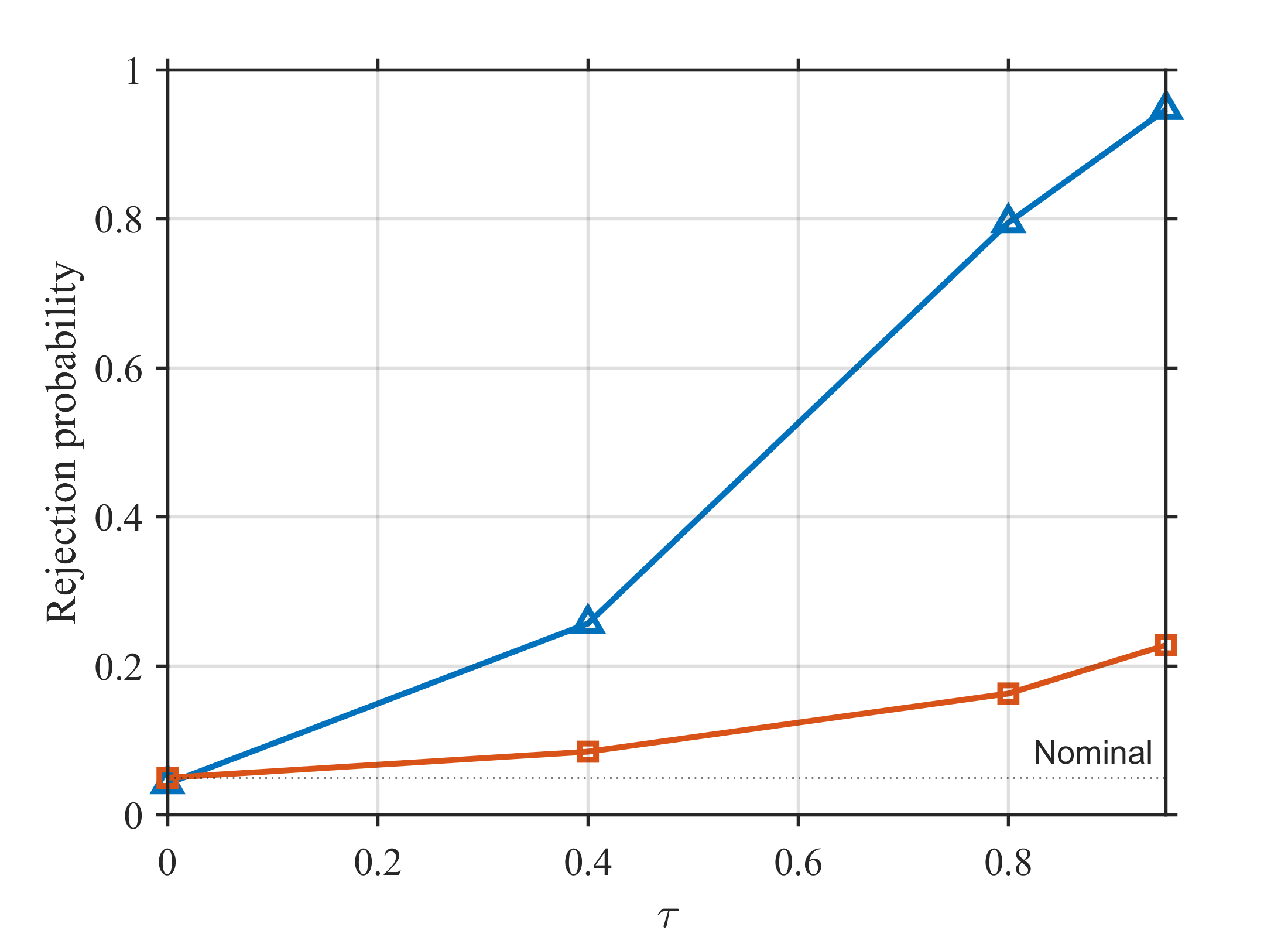}
    \end{subfigure}
    \begin{subfigure}{0.48\textwidth}
        \caption{No trend, sparse}
     \includegraphics[width=\textwidth]{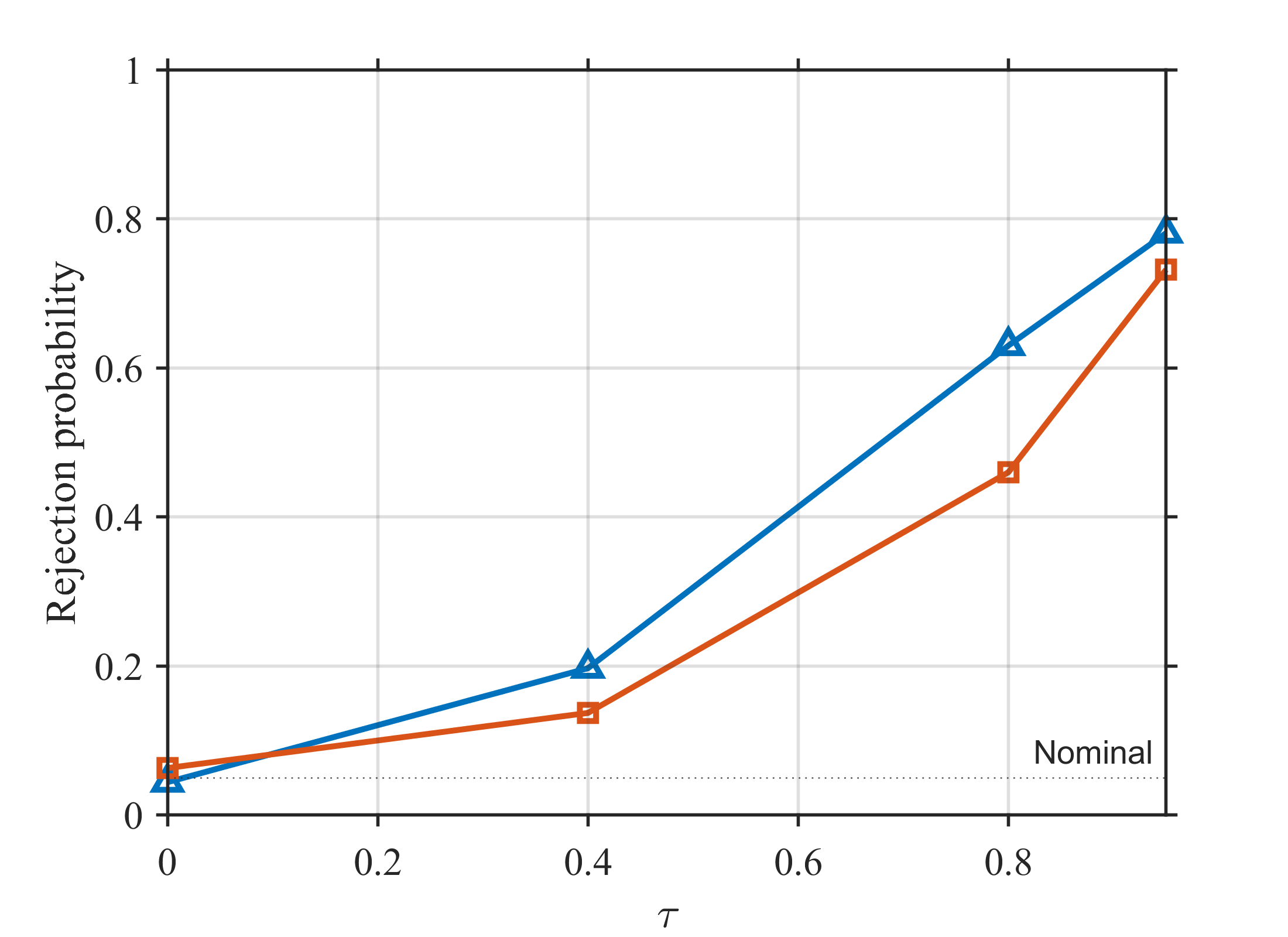}
     \end{subfigure}
        \begin{subfigure}{0.48\textwidth}
    \caption{Trend, dense}
    \includegraphics[width=\textwidth]{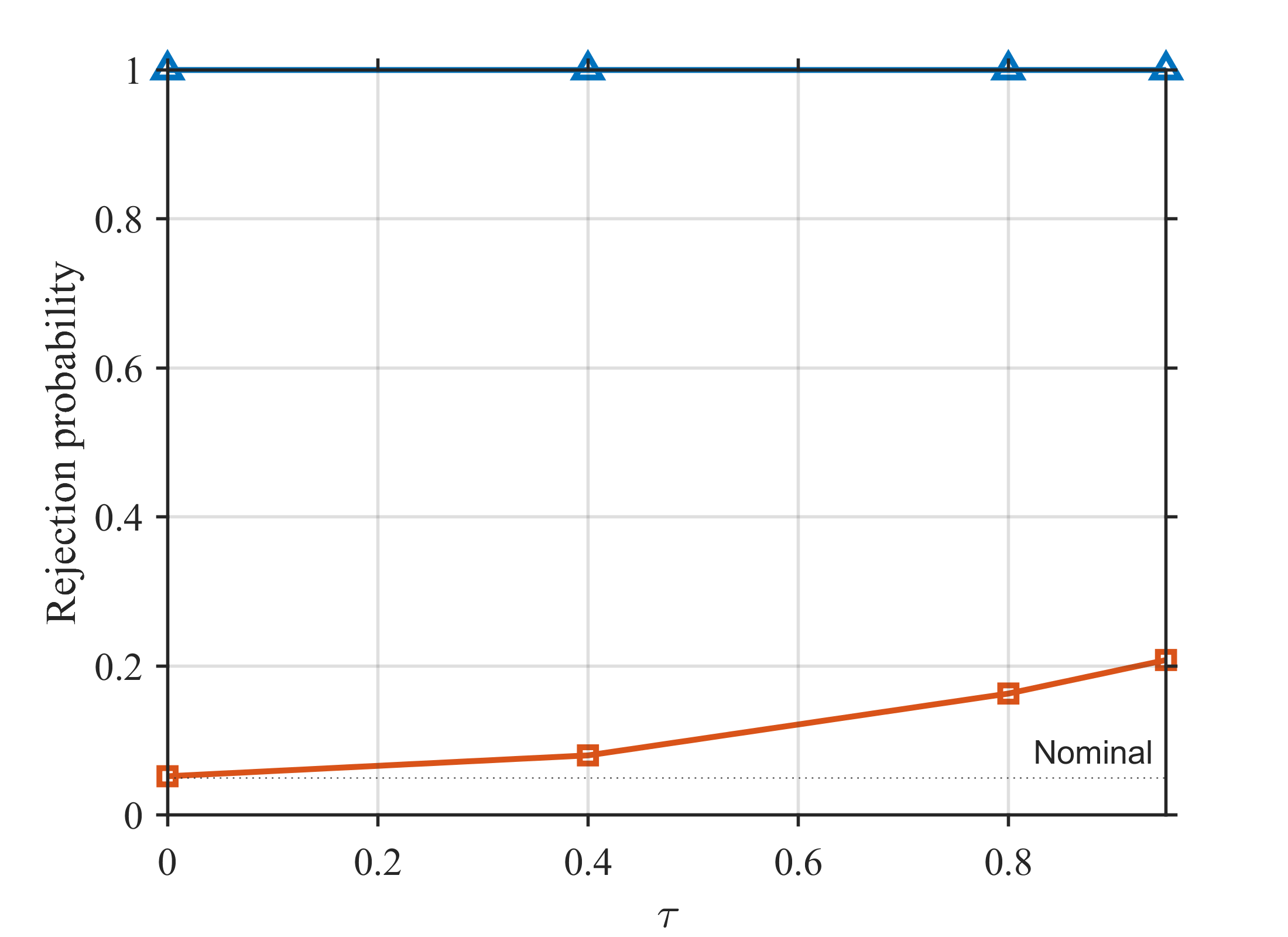}
    \end{subfigure}
    \begin{subfigure}{0.48\textwidth}
        \caption{Trend, sparse}
     \includegraphics[width=\textwidth]{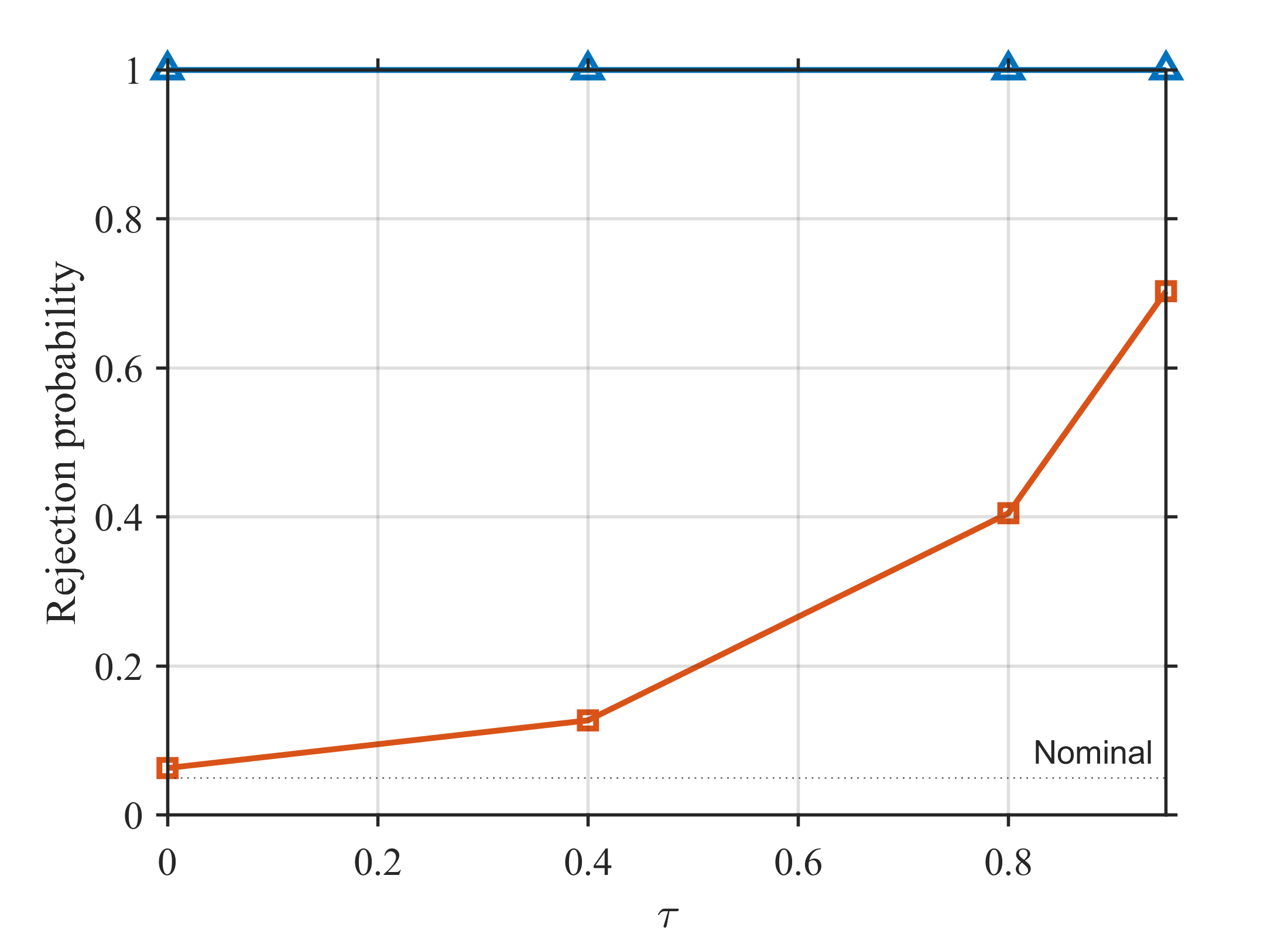}
     \end{subfigure}
     \begin{minipage}{0.95\textwidth}
    \small \textit{Note}: Upper panel show power under the linear peer effects model without a trend under both a dense (left) and sparse (right) network. The lower panel shows the same results for a model with a trend. Blue lines indicate results from the unweighted statistic with $b_{\star} = \max(\text{tr}(\bGamma^2),0)$, orange lines show the weighted version. In this simulation design the choice of $b_{\star}$ does not affect the results.
     \end{minipage}
\end{figure}

\subsection{Results}
The results are reported in Figure~\ref{fig:feasible_no_trend} for $T=4$ and in Figure~\ref{fig:feasible_no_trend2} for $T=8$. The figures in the left panel correspond to the dense network design and the figures in the right panel to the sparse network design.

\begin{figure}[t]
    \caption{Simulation results: Power ($N=250$, $T=8$).}
    \label{fig:feasible_no_trend2}
    \centering
    \begin{subfigure}{0.48\textwidth}
    \caption{No trend, dense}
    \includegraphics[width=\textwidth]{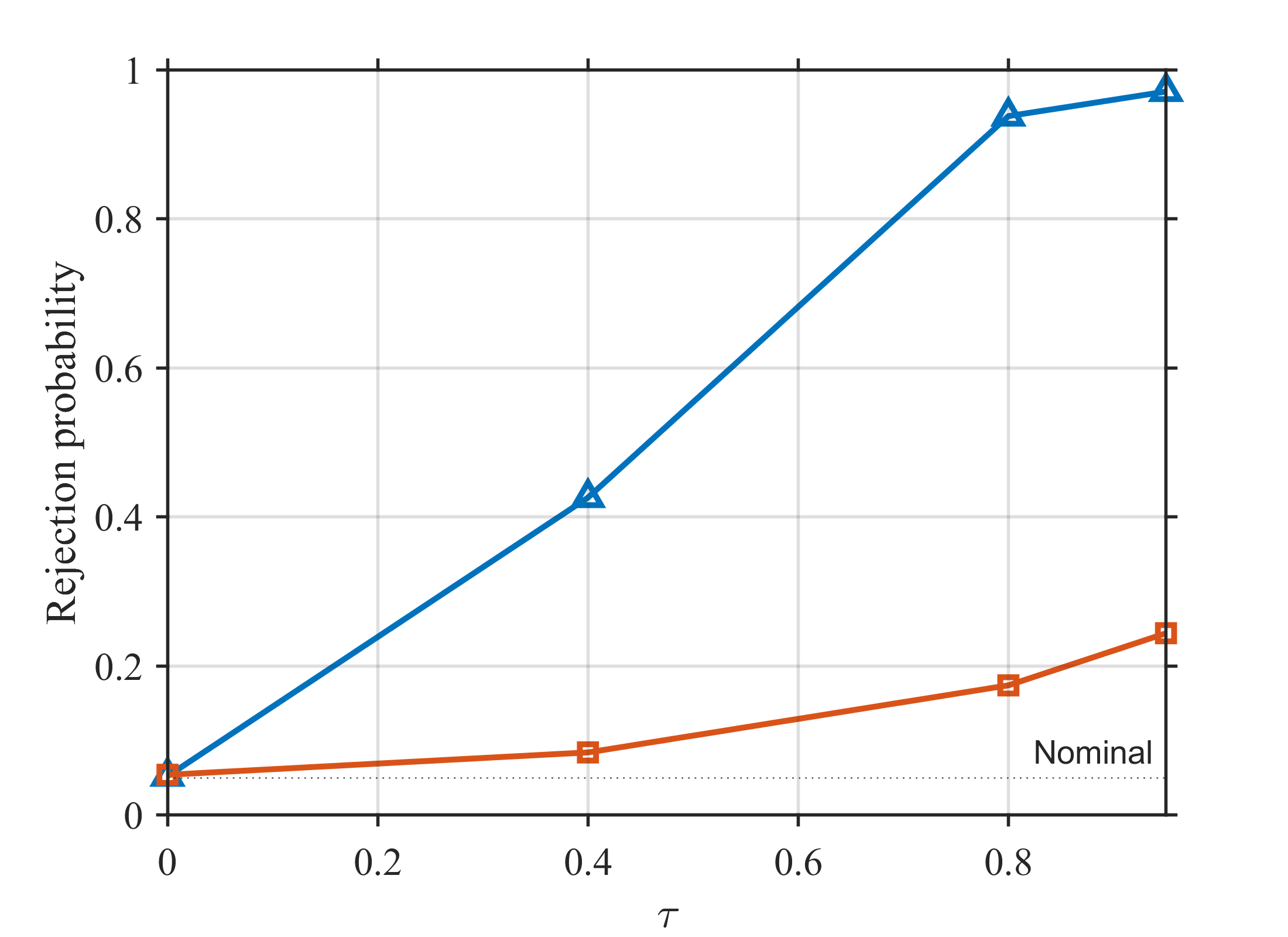}
    \end{subfigure}
    \begin{subfigure}{0.48\textwidth}
        \caption{No trend, sparse}
     \includegraphics[width=\textwidth]{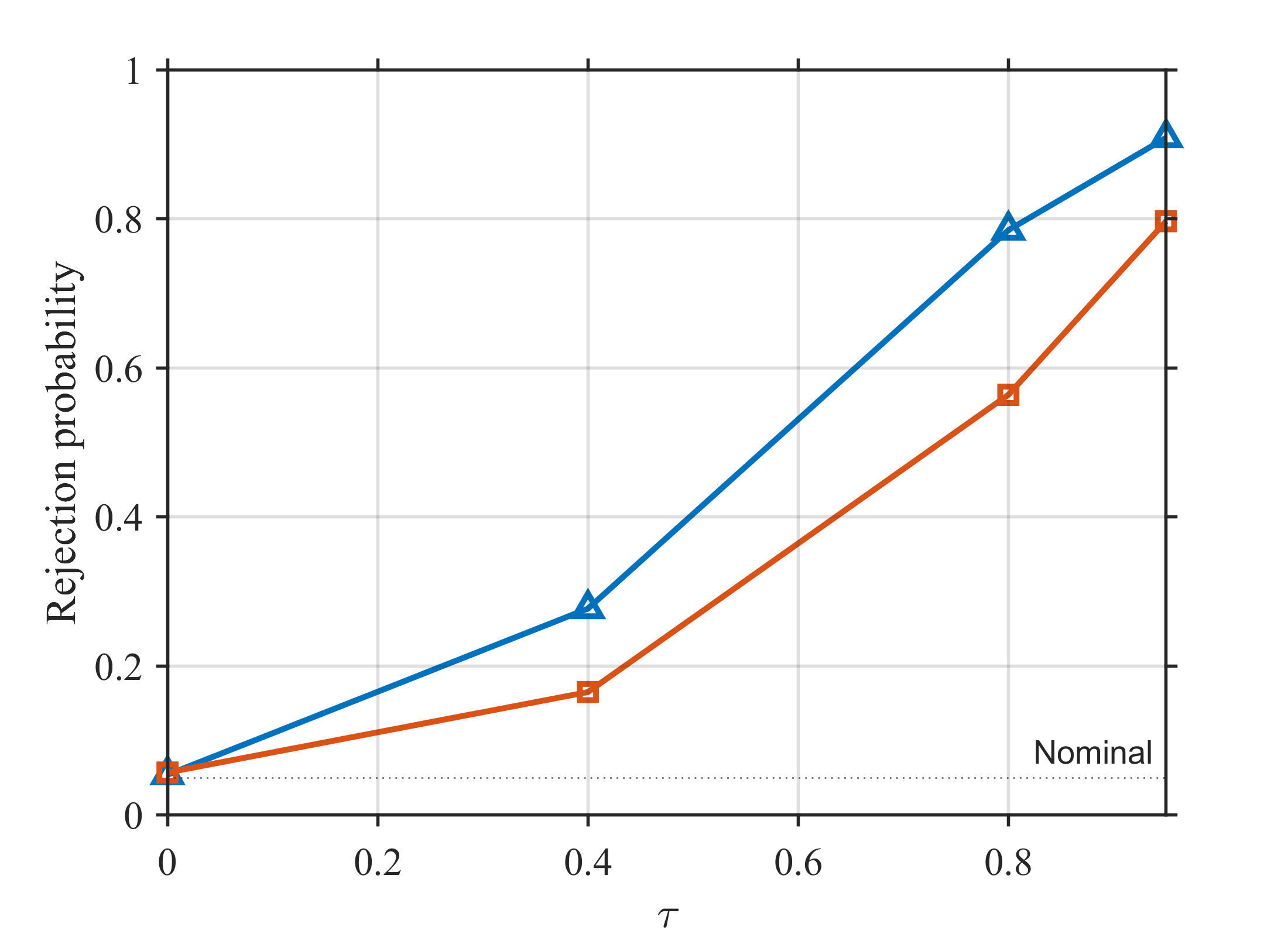}
     \end{subfigure}
        \begin{subfigure}{0.48\textwidth}
    \caption{Trend, dense}
    \includegraphics[width=\textwidth]{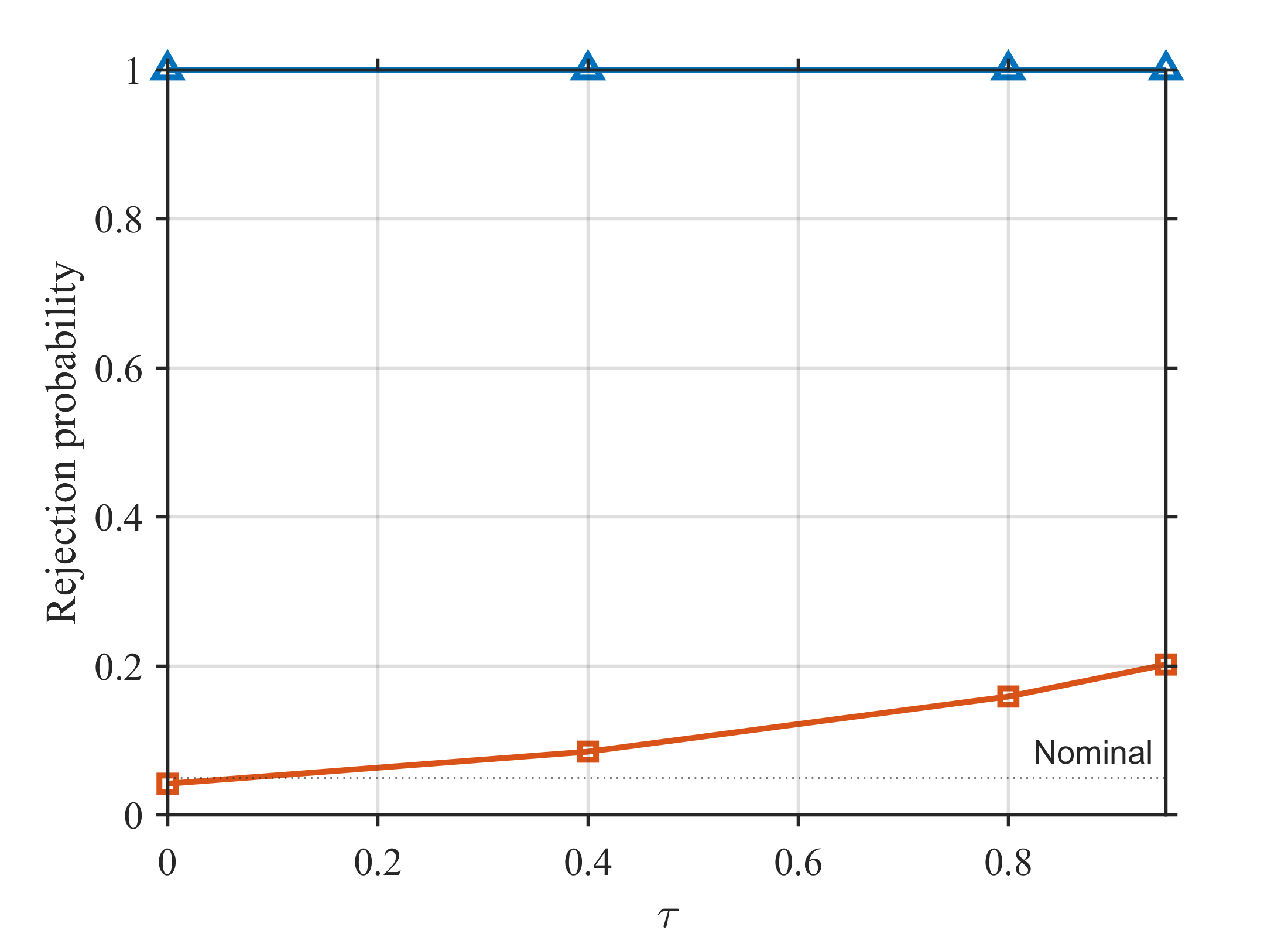}
    \end{subfigure}
    \begin{subfigure}{0.48\textwidth}
        \caption{Trend, sparse}
     \includegraphics[width=\textwidth]{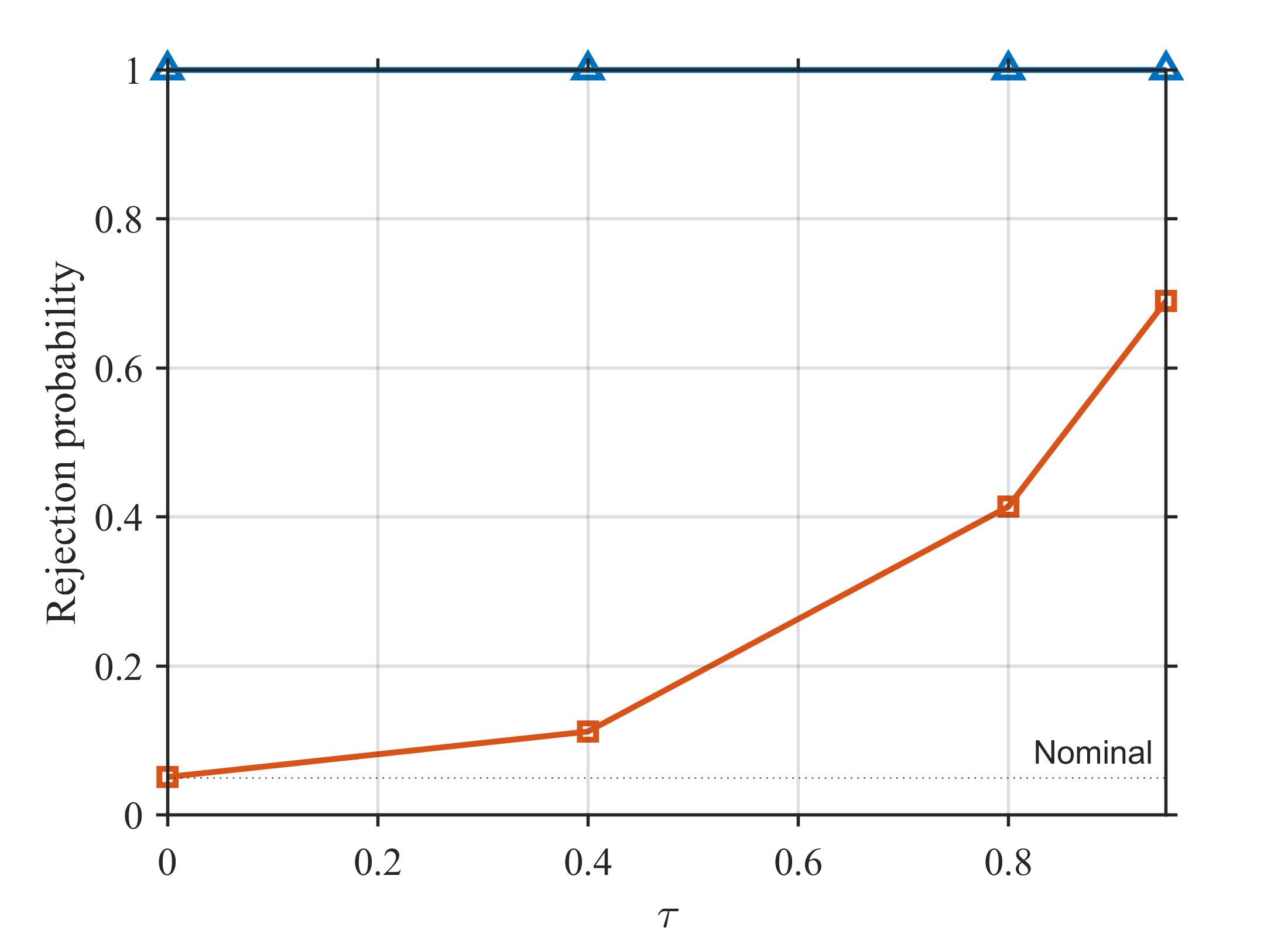}
     \end{subfigure}
     \begin{minipage}{0.95\textwidth}
    \small \textit{Note}: Upper panel show power under the linear peer effects model without a trend under both a dense (left) and sparse (right) network. The lower panel shows the same results for a model with a trend. Blue lines indicate results from the unweighted statistic with $b_{\star} = \max(\text{tr}(\bGamma^2),0)$, orange lines show the weighted version. In this simulation design the choice of $b_{\star}$ does not affect the results.
     \end{minipage}
\end{figure}
When no common time component is present, both the unweighted conditional test and the weighted conditional test have rejection probabilities close to the nominal level under \(\tau=0\). This confirms that the weighted construction does not introduce a size distortion when the time-effect nuisance is absent. The unweighted statistic is more powerful in this case, especially in the dense design. This is expected: without a common time component, the exact balancing restriction is unnecessary and the weighted statistic pays a power cost for imposing it.

The picture changes sharply when the common time component is present. The unweighted conditional test rejects with probability close to one even at \(\tau=0\), showing that the common time component is interpreted as cross-sectional dependence. By contrast, the weighted test remains close to the nominal level under the null. This confirms that the balancing condition removes the time-effect contribution from the aggregate score. As \(\tau\) increases, the power curve is very close to the curve in the model without a trend.

Overall, the simulations illustrate the tradeoff created by the weighted procedure. In designs without time effects, weighting is conservative relative to the unweighted statistic and can reduce power. In designs with common time effects, however, weighting is essential for size control and restores a meaningful interpretation of rejection as evidence against the null of no peer effects.

\section{Empirical illustration}
\label{sec:application_relative_wages}
\subsection{Peer (spillover) effects in relative wages}
We apply our procedure to investigate presence of industry-level spillovers between in the US manufacturing industries. For this purpose we use the data from
\citet{voigtlander2014skill} that was used by \citet{yin2021focused} and \citet{Juodis2022CCER}, among others.
The data contain \(N=313\) cross-sectional sectors observed annually from 1958 to
2005. We use the narrow specification from \citet{yin2021focused}.
\begin{equation}
\ln\left(\frac{w_{L.it}}{w_{H.it}}\right)
=
\beta_{1i}\ln(\sigma_{it})
+
\beta_{2i}\ln\left(\frac{H_{it}}{L_{it}}\right)
+
\blambda_i'\blf_{t} + \tau_t
+
\varepsilon_{it}.
\end{equation}

\begin{figure}[t]
\centering
\caption{Application: conditional $p$-values without incidental trends}
\label{fig:relative_wage_notrend}
\includegraphics[width=\textwidth,trim=2cm 0 2cm 0,clip]{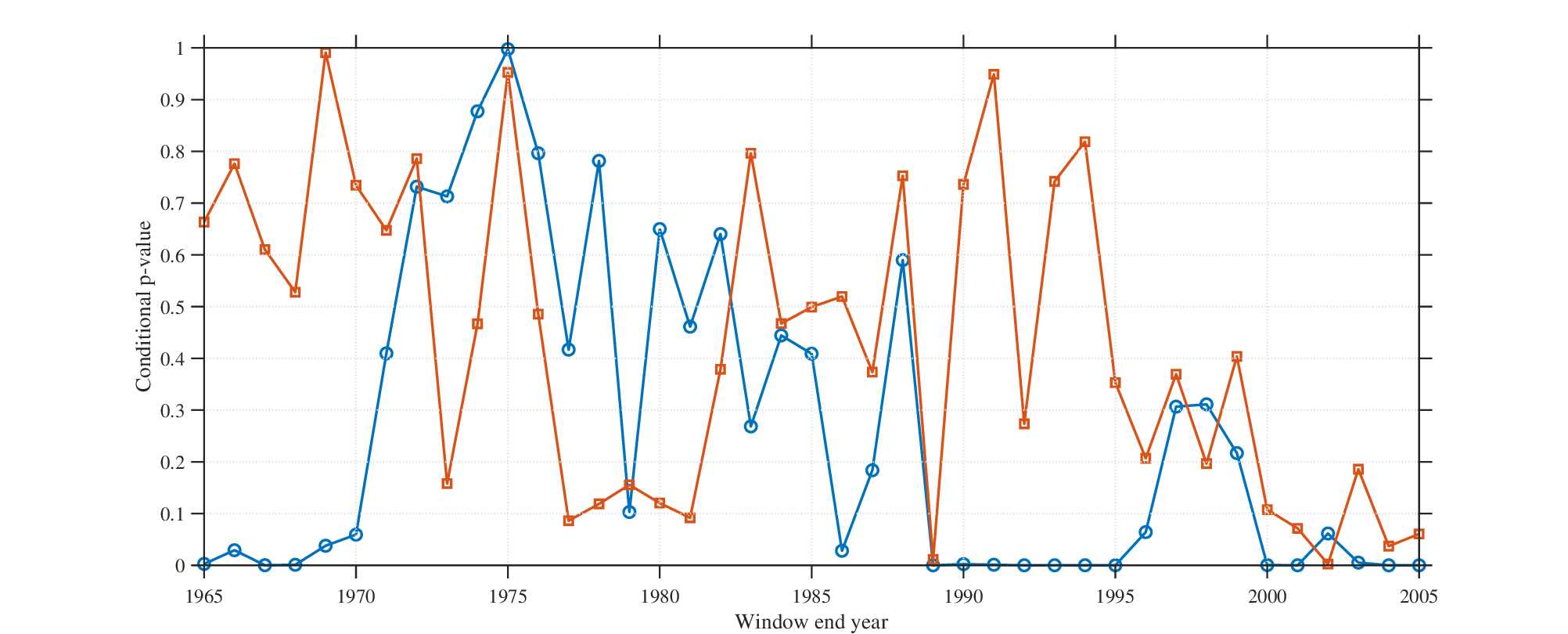}
\begin{minipage}{0.95\textwidth}
\small \textit{Note}: Red corresponds to the weighted statistic; Blue corresponds to the unweighed statistic.
\end{minipage}
\end{figure}

The outcome is the log of the ratio of wage for low skilled workers and high-skilled workers, \(\sigma_{it}\) is the input intensity skill measure and $H_{it}$ and $L_{it}$ are the number of high-skilled and low-skilled workers respectively. Based on some visual evidence in \citet{Juodis2022CCER},\footnote{See the corresponding Supplementary Online Appendix.} it is reasonable to expect that the largest part of the common factor can be approximated by linear deterministic trend, so we consider a variant of the model where $\blf_t = (1,t)'$. Similarly to our Monte Carlo study, we consider test statistics with and without the time effects $\tau_t$.

We construct our test statistic using rolling windows
\(\mathcal T_r=\{r,\ldots,r+T_{w}-1\}\) for some pre-specified length \(T_{w}\). Let \(\by_{i,r}\) and \(\bX_{i,r}\) denote
the corresponding \(T_{w}\times1\) outcome vector and an \(T_{w}\times2\) (narrow) regressor
matrix.
The projected component is
\begin{equation}
\bp_{i,r}:=\bP_{i,r}\by_{i,r},
\qquad
\bP_{i,r}
:=
\bX_{i,r}(\bX_{i,r}'\bX_{i,r})^{-1}\bX_{i,r}'.
\end{equation}
Throughout this application the projection side uses the two narrow-model regressors directly and does not orthogonalize with respect to the time trend. We set \(T_{w}=8\).

\begin{figure}[t]
\centering
\caption{Application: conditional $p$-values with incidental trends}
\label{fig:relative_wage_trend}
\includegraphics[width=\textwidth,trim=2cm 0 2cm 0,clip]{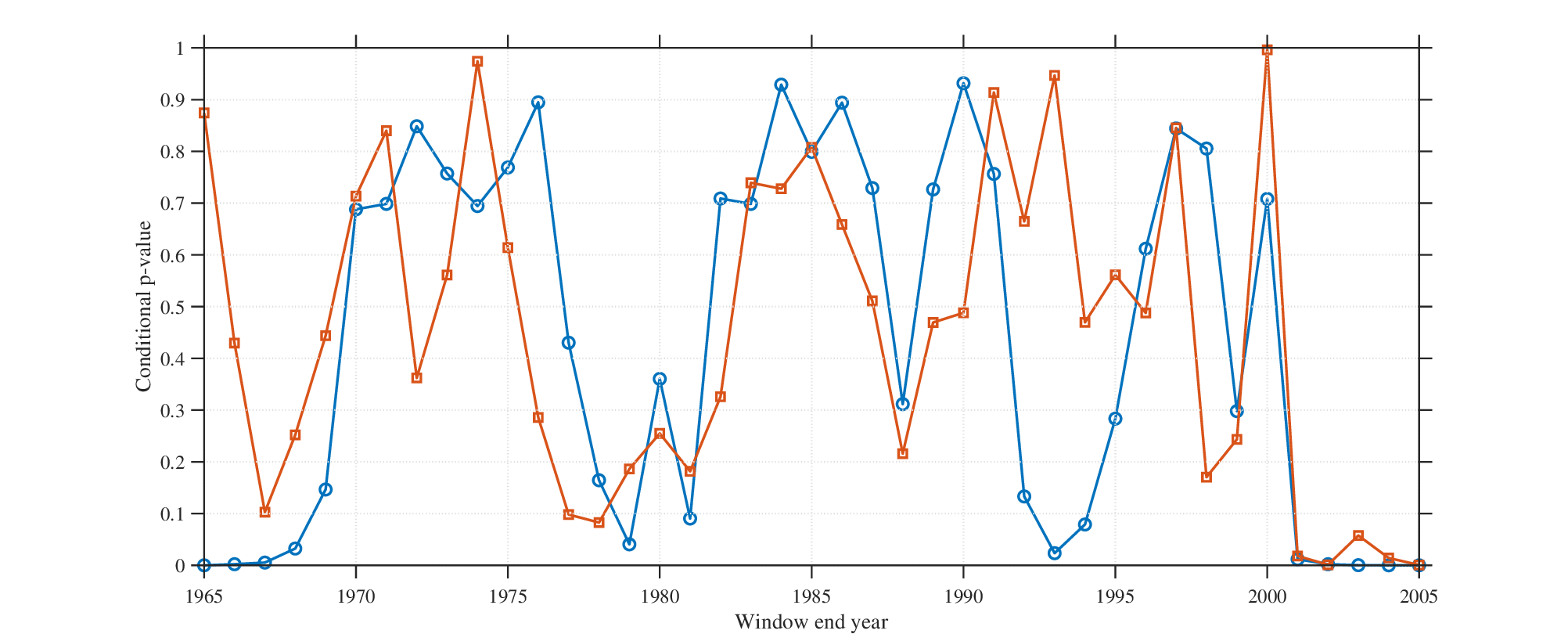}
\begin{minipage}{0.95\textwidth}
\small \textit{Note}: Red corresponds to the weighted statistic; Blue corresponds to the unweighed statistic.
\end{minipage}
\end{figure}
Figures~\ref{fig:relative_wage_notrend} and
\ref{fig:relative_wage_trend} report rolling conditional \(p\)-values, indexed
by the final year of the eight-year window. The no-trend specification in
Figure~\ref{fig:relative_wage_notrend} displays several episodes of rejection.
The unweighted statistic is especially small in windows ending in the late
1960s, around the early 1990s, and again at the end of the sample. The weighted
statistic mainly shows lower $p$-values after 2000. \

Figure~\ref{fig:relative_wage_trend} shows that adding individual linear trends
changes the interpretation. First, it brings the weighted and the unweighted statistic much closer, suggesting that after accounting for incidental trends, the common time effects have much less of an effect. What remains is the rejection of the null in the post 2000 years. However, because the rolling
windows overlap, individual \(p\)-values should not be interpreted as independent
tests.

\begin{figure}[t]
\centering
\caption{Application: group pair contributions, window 1998-2005}
\label{fig:heat_map}
\includegraphics[width=\textwidth,trim=0 0 5.25cm 0,clip]{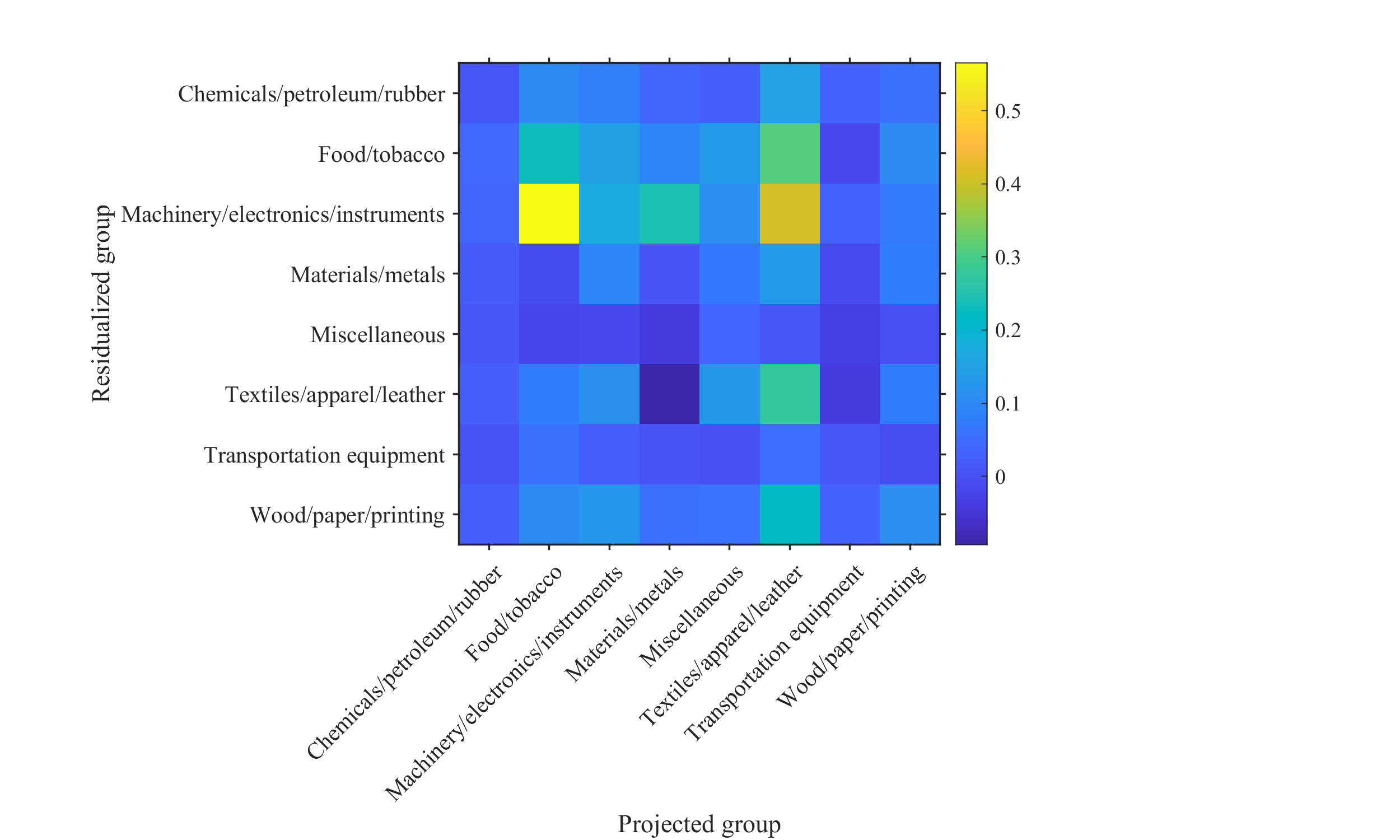}
\end{figure}

\subsection{Potential origins of dependence}
To diagnose the source of the late-sample rejections, we decompose the numerator
of the unweighted statistic by broad two-digit SIC manufacturing groups. Let
\(G_i\) denote the industry group of sector \(i\). For groups \(g\) and \(h\),
define
\begin{equation}
\widehat S_{g,h}
:=
\frac{1}{N}
\left(\sum_{j:G_j=h}\bp_j\right)'
\left(\sum_{i:G_i=g}\ba_i\right),
\end{equation}
so that
\(
\widehat S=\sum_{g,h}\widehat S_{g,h}.
\)
Figure~\ref{fig:heat_map} shows a heatmap where rows correspond to the residualized \(a\)-side group \(g\), and
columns correspond to the projected \(p\)-side group \(h\). Thus the
decomposition is directional in the sense of the statistic: a large cell
\((g,h)\) means that the projected outcome component of group \(h\) is aligned
with the residualized outcome component of group \(g\). In the linear peer effects model, this is consistent with outcomes from group \(h\) appearing in the outcome equations for group \(g\).

The decomposition shows that the post-2000 rejection is driven mainly by an
interaction between food/tobacco and
machinery/electronics/instruments. In the 1998--2005 window, the largest
contribution comes from the projected food/tobacco component and the
residualized machinery/electronics/instruments component. Under a literal
linear peer-effects interpretation, this corresponds to the
relative wage gap in food/tobacco appearing as an explanatory component in the
relative-wage equation for machinery/electronics/instruments, but not
conversely. A possible explanation is that food/tobacco captures a broad,
domestic manufacturing wage-gap component that remains unexplained in
machinery/electronics/instruments after controlling for the narrow-model
covariates and individual trends. The reverse cell is small because
food/tobacco is not affected by machinery-specific shocks to the wage gap that can arise from offshoring,
technology adoption, and global supply-chain restructuring.

\section{Conclusion}
\label{sec:conclusion}

This paper proposes testing framework for peer effects in short panels when the network is unknown. Proposed construction avoids estimating the network by aggregating pairwise comparisons between covariate-projected outcomes of potential peers and residualized own outcomes. This yields a fixed-\(T\), large-\(N\) procedure for testing the global null of no peer effects.

The theory shows that standard Gaussian critical values are valid when the projected component is strong, but can fail when this component is small. We therefore develop a conditional inference procedure based on the Gaussian limit experiment. Conditioning removes the local nuisance parameter and leads to feasible simulated critical values. We also propose an extension of our method that allows additive time effects in the true model.

The simulations support the theory. As expected from our theory, the use of Gaussian critical values can be misleading if the projected component is small. The proposed conditional procedure provides more reliable size control and retains power against latent peer-effect alternatives. Empirical relevance of the proposed method is illustrated using the real dataset.

Finally, in this paper we limited our attention to the setting where $N$ is large, but $T$ is fixed. It is natural to ask what happens to our procedure when also $T$ increases. We conjecture that our procedure will deliver asymptotically valid inference even in such cases as long as the unweighted statistic is considered. For the weighted statistic, the analysis is more subtle due to the high-dimensional regression adjustment used in the construction of the denominator. A formal analysis of such extensions, however, is beyond the scope of this study.

\paragraph{Funding} Financial support from the Dutch Research Council (NWO) under research grants VI.Vidi.241E.008 (first author) and VI.Vidi.231E.030 (second author) is gratefully acknowledged.
\setlength{\bibsep}{0pt}
\bibliographystyle{chicago}
\bibliography{bibliography}
\appendix

\section{Additional simulation results}

\subsection{Power in the limiting experiment}\label{app:compareCondKlei}
Suppose \(\bGamma=\mu\bI_2\). Strictly speaking, this case does not fit the theory as $\text{tr}(\bGamma)=0$ in our specific set-up, but it serves as a useful example to study size and power properties of the proposed test. This case is rotationally invariant. Therefore the conditional law depends on \(\blm\) only through \(m=\|\blm\|\). Moreover, \(\operatorname{tr}(\bGamma^2)=\operatorname{tr}(\bGamma\bGamma')=2\mu^2\), so the three normalizations coincide.

At \(m=0\), \(\bh=\mu\bz_p\). Hence
\begin{equation}
R_\star(\bh_0)
=
\frac{\mu\|\bz_a\|^2}
{\sqrt{2\mu^2+\mu^2\|\bz_a\|^2}}
=
\operatorname{sgn}(\mu)
\frac{\|\bz_a\|^2}{\sqrt{2+\|\bz_a\|^2}}.
\end{equation}
Thus the conditional distribution at the origin is non-Gaussian and, for \(\mu>0\), supported on positive values. Figure~\ref{fig:scalariso_results} shows that Gaussian critical values overreject as \(\mu\) increases, while the conditional version offers close to nominal size.

\begin{figure}[t]
\caption{Scalar-isotropic \(2\times2\) design.}
\label{fig:scalariso_results}
\centering
\begin{subfigure}{0.48\textwidth}
    \caption{Rejection at \(\bh_0=\mathbf 0\).}
    \label{fig:scalariso_h0}
  \centering
    \includegraphics[width=\linewidth]{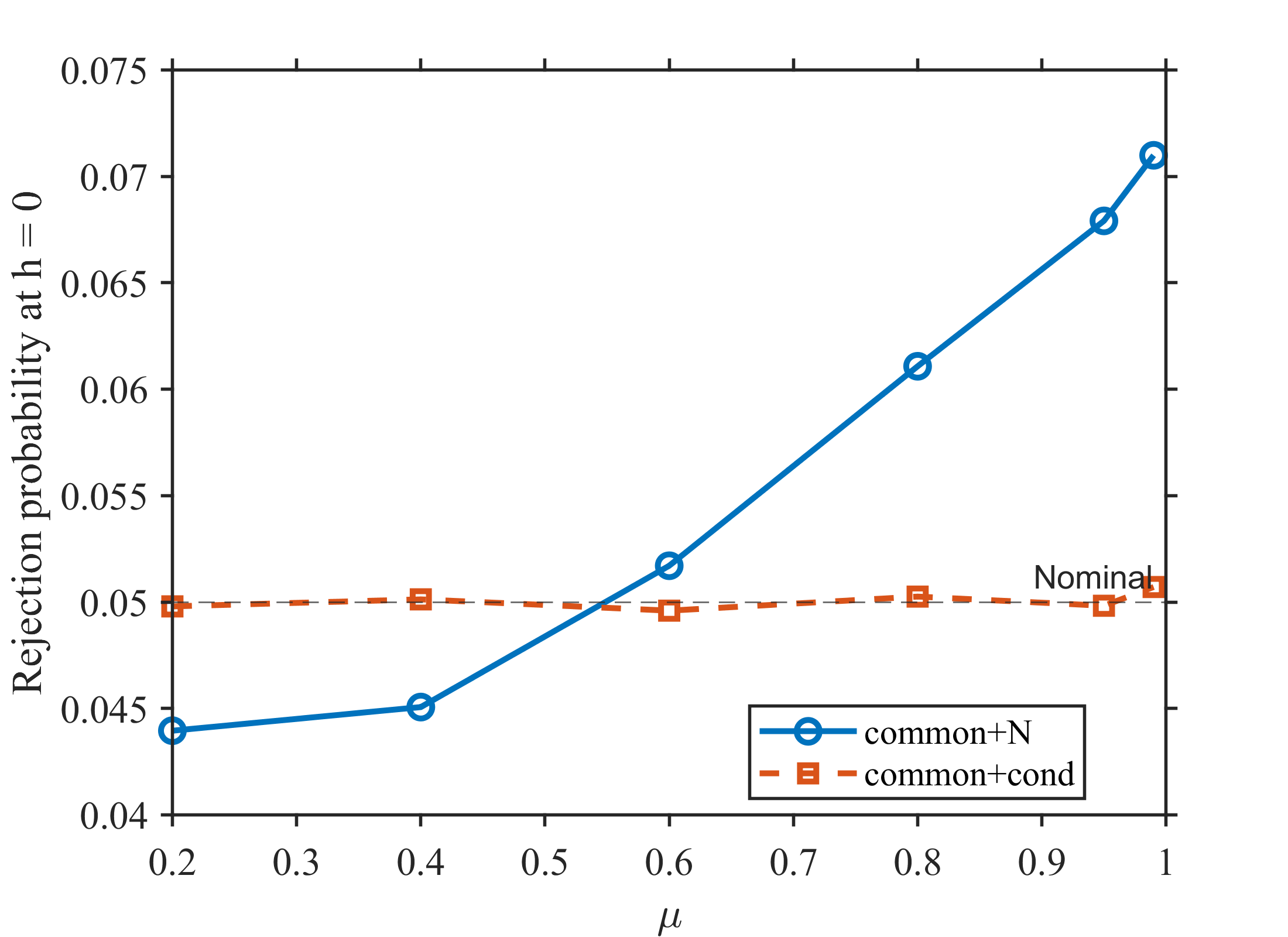}
\end{subfigure}
\hfill
\begin{subfigure}{0.48\textwidth}
      \caption{Worst-case rejection over the \(\bh_0\)-grid.}
    \label{fig:scalariso_worst}
 \centering
    \includegraphics[width=\linewidth]{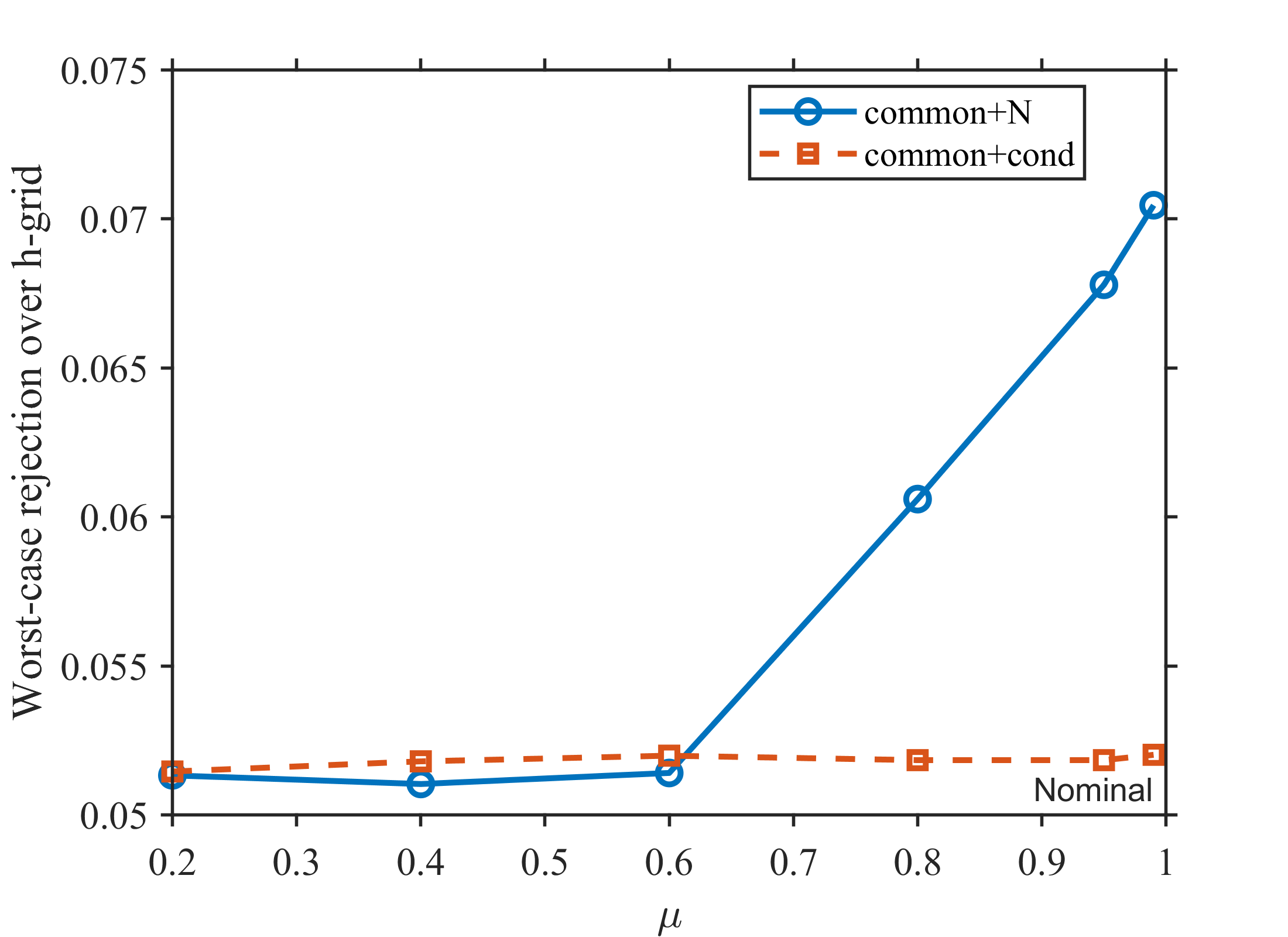}
\end{subfigure}
\begin{minipage}{0.92\textwidth}
\small \textit{Note}: The figure reports rejection probabilities in the scalar-isotropic design \(\bGamma=\mu\bI_2\). The conditional law is rotationally invariant and depends only on \(m=\|\blm\|\).
\end{minipage}
\end{figure}

This setting is also well-suited to study the power properties of the conditional and score test. The canonical experiment under the alternative is
\begin{equation}
\bz_a = \bmu_a + \bepsi,
\qquad
\bz_p = \bh + \bmu_p + \bGamma \bepsi + \bet,\qquad \bepsi\perp\bet,
\end{equation}
where
\(
\bepsi \sim \mathcal{N}(\bzero_{2},\bI_2)\), \(
\bet \sim \mathcal{N}(\bzero_{2},\bI_2-\bGamma\bGamma'),
\). By construction
\(
\blm = \bz_p - \bGamma \bz_a
\),
\(
\blm = \bh + \bdelta_m + \bet\), and \(
\bdelta_m = \bmu_p - \bGamma \bmu_a.
\)

Our proposed original statistic is given by (in the limit experiment)
\begin{equation}
R_{\mathrm{orig}}
=
\frac{\bz_p'\bz_a}{\sqrt{\operatorname{tr}(\bGamma^2)+\|\bz_p\|^2}},
\end{equation}
while the pivotal (limiting) score statistic as in \citet{kleibergen2002pivotal} is of the form
\begin{equation}
R_K = \frac{\blm'\bz_a}{\|\blm\|}.
\end{equation}

We consider two alternatives. The first is designed to favor the original statistic where we set \(
\bmu_p = \bGamma \bmu_a.
\)
In this case,
\(
\bdelta_m = \bmu_p - \bGamma \bmu_a = \bzero,
\)
so
\(
\blm = \bh+\bet.
\)
When \(\bh=\bzero\), the score statistic does not contain a deterministic alignment between \(\blm\) and \(\bz_a\). By contrast, since
\(
\bz_p = \blm+\bGamma \bz_a,
\)
the original numerator can be decomposed as
\(
\bz_p'\bz_a = \blm'\bz_a + \bz_a'\bGamma \bz_a = \blm'\bz_a + c\|\bz_a\|^2.
\)
The original statistic therefore contains the same residualized-correlation component as \(R_K\), plus a positive quadratic term in \(\bz_a\). 

The second alternative is designed to favor the score statistic. We set
\(
\bmu_p = \bzero.
\)
Then
\(
\bdelta_m = -\bGamma \bmu_a = -c\bmu_a.
\)
Thus, when \(\bh=\bzero\),
\(
\E[\bz_{p}'\bz_a] = \bdelta_m'\bmu_a = -c\|\bmu_a\|^2.
\)
The score statistic therefore receives a direct deterministic signal from the alignment between \(\blm\) and \(\bz_a\). By contrast, the original statistic does not have a mean-product signal of the form \(\bmu_p'\bmu_a\), since \(\bmu_p=\bzero\). 

For the local power analysis, we parameterize \(
\bmu_a = 2\theta \bv,\)
where \(\theta\) indexes the strength of the local alternative, and the fixed direction $\bv$ is normalized so that \(\|\bv\|=1\). Figure~\ref{figapp:power} shows that indeed neither test dominates the other.

\begin{figure}[t]
\centering
\caption{Power comparison in the isotropic experiment.}
\label{figapp:power}
\includegraphics[width=0.48\textwidth]{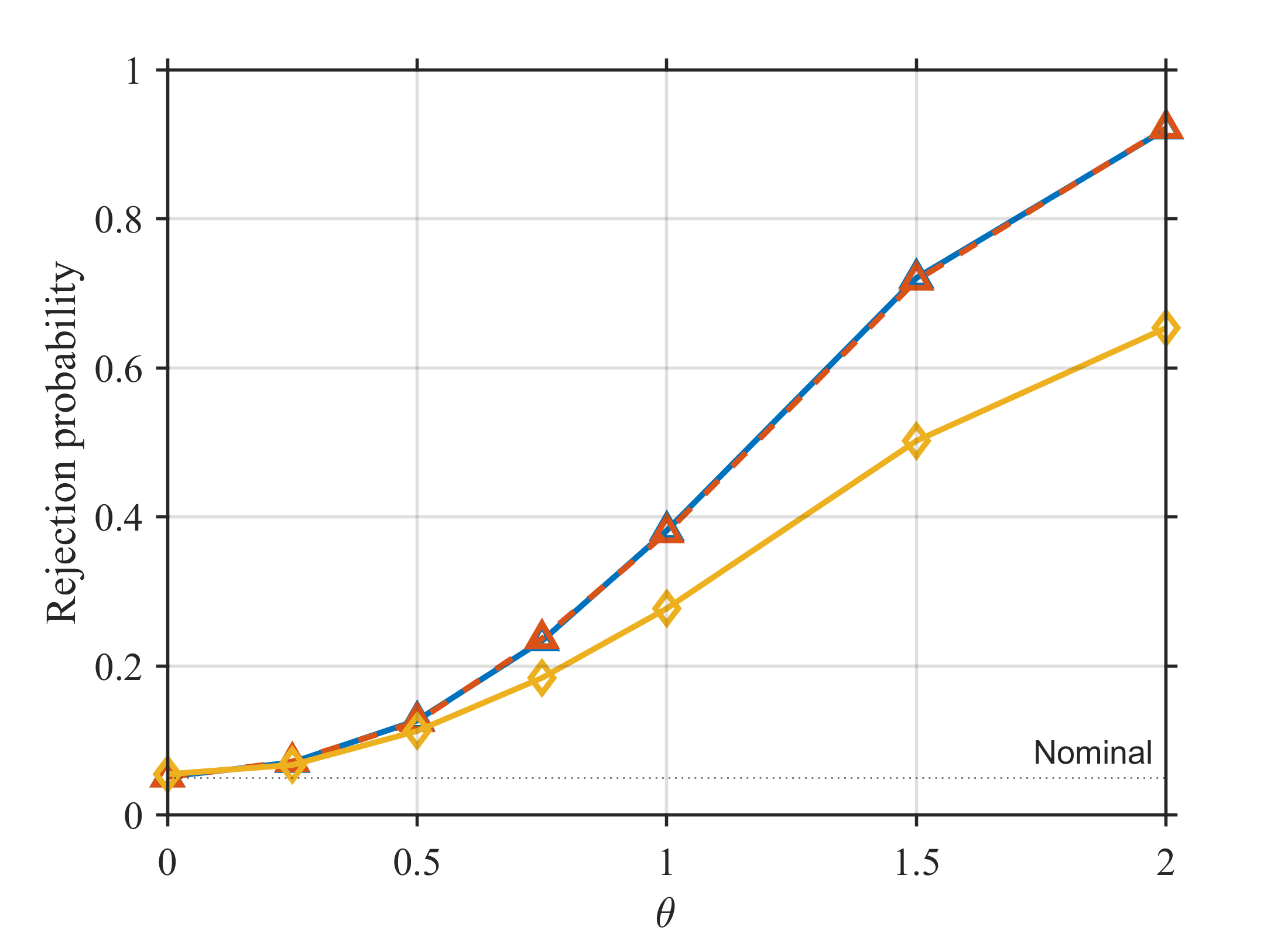}
\includegraphics[width=0.48\textwidth]{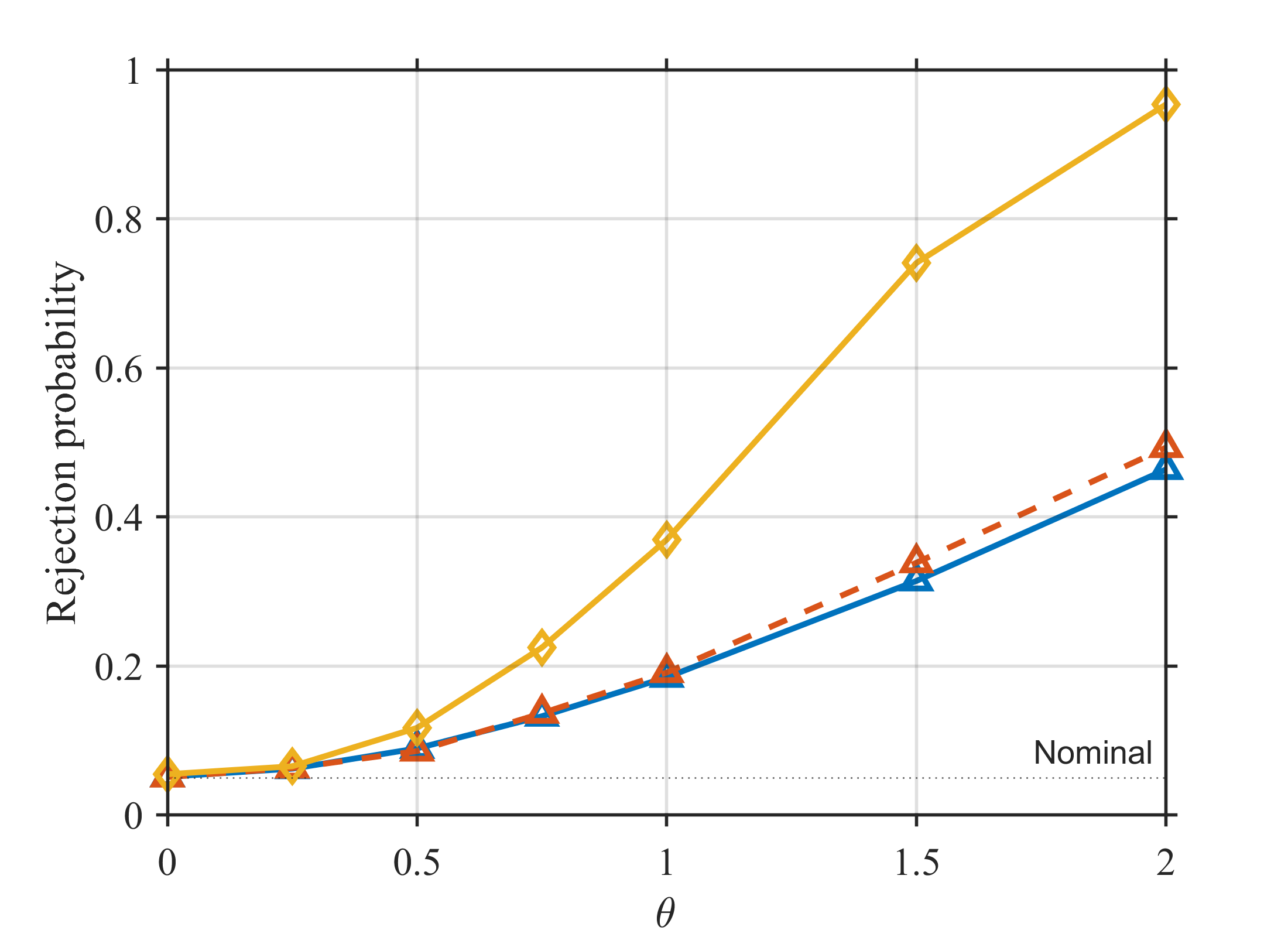}
\begin{minipage}{0.95\textwidth}
\footnotesize\textit{Note}: On the left, the design where the score test (yellow line) has lower power relative to the conditional test (red). On the right, the design where the score test has higher power relative to the conditional test. The red dashed line uses conditional critical values, the blue solid line uses standard normal critical values.
\end{minipage}
\end{figure}

\end{document}

%% file: packages_and_commands.tex
\usepackage{amsmath}
\usepackage{amssymb}
\usepackage{amsthm}
\usepackage[authoryear]{natbib}
\usepackage{booktabs}
\usepackage{graphicx}
\usepackage{epstopdf}
\usepackage{caption}
\usepackage{bm}
\usepackage{bbm}
\usepackage{xcolor}
\usepackage{pdflscape}
\usepackage{longtable}
\usepackage{threeparttablex}
\usepackage{mathrsfs}
\usepackage[english]{babel}
\usepackage{subcaption}
\usepackage[a4paper,margin=1in]{geometry}
\usepackage[title]{appendix}
\usepackage{float}
\usepackage[doublespacing]{setspace}
\usepackage[]{threeparttable}
\usepackage{multirow}
\usepackage{embolds}

\usepackage[hypertexnames=false]{hyperref}
\definecolor{lightblue}{rgb}{0, 0.4, 0.6}
\hypersetup{
    breaklinks=true,
	colorlinks = true,
	citecolor = {lightblue},
	linkcolor = {lightblue},
	urlcolor = {lightblue}
}

\usepackage[capitalise,nameinlink,noabbrev]{cleveref}
\crefname{equation}{}{}
\crefname{appsec}{Appendix}{Appendices}
\crefname{appsubsec}{Appendix}{Appendices}

\usepackage[inline,shortlabels]{enumitem}
\newlist{assenumerate*}{enumerate*}{4} 
\setlist[assenumerate*]{label=(\roman*), ref=\theassumption.\roman*}
\crefname{assenumerate*i}{Assumption}{Assumptions}

\newlist{assenumerate}{enumerate}{4} 
\setlist[assenumerate]{label=(\roman*), ref=\theassumption.\roman*}
\crefname{assenumeratei}{Assumption}{Assumptions}

\usepackage{tikz}
\usetikzlibrary{shapes}
\usetikzlibrary{decorations.pathreplacing}

\newtheorem{counter}{counter} 
\newtheorem{theorem}[counter]{Theorem}
\crefname{theorem}{Theorem}{Theorems}

\crefname{conjecture}{Conjecture}{Conjectures}

\crefname{lemma}{Lemma}{Lemmas}

\newtheorem{remark}[counter]{Remark}
\crefname{definition}{Definition}{Definitions}
\newtheorem{assumption}{Assumption}
\renewcommand{\theassumption}{A\arabic{assumption}} 
\crefname{assumption}{Assumption}{Assumptions}
\newtheorem{corollary}[counter]{Corollary}
\crefname{corollary}{Corollary}{Corollaries}
\newtheorem{proposition}[counter]{Proposition}
\crefname{proposition}{Proposition}{Propositions}
\theoremstyle{definition}

\DeclareMathOperator{\E}{E}

\newtheoremstyle{compactthm}
{0.6ex plus 0.2ex minus 0.1ex} 
{0.6ex plus 0.2ex minus 0.1ex} 
{\itshape}                     
{0pt}                          
{\bfseries}                    
{.}                            
{0.5em}                        
{}                             

\theoremstyle{compactthm}

\usepackage{titlesec}

\setlist{topsep=0.3ex, itemsep=0.2ex, parsep=0pt, partopsep=0pt}

%% file: bibliography.bib
@article{jung2026testing,
  title={Testing for peer effects without specifying the network structure},
  author={Jung, Hyunseok and Liu, Xiaodong},
  journal={Journal of Econometrics},
  volume={253},
  pages={106124},
  year={2026},
  publisher={Elsevier}
}

@article{mikusheva2022inference,
  title={Inference with many weak instruments},
  author={Mikusheva, Anna and Sun, Liyang},
  journal={Review of Economic Studies},
  volume={89},
  number={5},
  pages={2663--2686},
  year={2022},
  publisher={Oxford University Press}
}

@article{bekker1994alternative,
  title={Alternative approximations to the distributions of instrumental variable estimators},
  author={Bekker, Paul A},
  journal={Econometrica},
  volume={62},
  number={3},
  pages={657--681},
  year={1994},
  publisher={JSTOR}
}

@article{ANTOINE2012350,
title = {Efficient minimum distance estimation with multiple rates of convergence},
journal = {Journal of Econometrics},
volume = {170},
number = {2},
pages = {350-367},
year = {2012},
note = {Thirtieth Anniversary of Generalized Method of Moments},
issn = {0304-4076},
doi = {https://doi.org/10.1016/j.jeconom.2012.05.010},
url = {https://www.sciencedirect.com/science/article/pii/S0304407612001224},
author = {Bertille Antoine and Eric Renault}
}

@article{staigerstock1997,
 ISSN = {00129682, 14680262},
 URL = {http://www.jstor.org/stable/2171753},
 author = {Douglas Staiger and James H. Stock},
 journal = {Econometrica},
 number = {3},
 pages = {557--586},
 publisher = {[Wiley, Econometric Society]},
 title = {Instrumental Variables Regression with Weak Instruments},
 urldate = {2026-09-04},
 volume = {65},
 year = {1997}
}

@article{Juodis2022CCER,
author = {Juodis, Artūras},
doi = {10.1002/jae.2899},
issn = {10991255},
journal = {Journal of Applied Econometrics},
	volume={37},
	number={4},
pages = {788--810},
title = {{A regularization approach to common correlated effects estimation}},
year = {2022}
}

@article{anderson1949estimation,
  title={Estimation of the parameters of a single equation in a complete system of stochastic equations},
  author={Anderson, Theodore W and Rubin, Herman},
  journal={Annals of Mathematical Statistics},
  volume={20},
  number={1},
  pages={46--63},
  year={1949},
  publisher={Institute of Mathematical Statistics}
}

@article{kleibergen2002pivotal,
  title={Pivotal statistics for testing structural parameters in instrumental variables regression},
  author={Kleibergen, Frank},
  journal={Econometrica},
  volume={70},
  number={5},
  pages={1781--1803},
  year={2002},
  publisher={Wiley Online Library}
}

@article{moreira2003conditional,
  title={A conditional likelihood ratio test for structural models},
  author={Moreira, Marcelo J},
  journal={Econometrica},
  volume={71},
  number={4},
  pages={1027--1048},
  year={2003},
  publisher={Wiley Online Library}
}

@article{pesaran2015testing,
  title={Testing weak cross-sectional dependence in large panels},
  author={Pesaran, M Hashem},
  journal={Econometric Reviews},
  volume={34},
  number={6-10},
  pages={1089--1117},
  year={2015},
  publisher={Taylor \& Francis}
}

@article{pesaran2021general,
  title={General diagnostic tests for cross-sectional dependence in panels},
  author={Pesaran, M Hashem},
  journal={Empirical Economics},
  volume={60},
  number={1},
  pages={13--50},
  year={2021},
  publisher={Springer}
}

@article{bramoulle2009identification,
  author  = {Bramoull{\'e}, Yann and Djebbari, Habiba and Fortin, Bernard},
  title   = {Identification of Peer Effects through Social Networks},
  journal = {Journal of Econometrics},
  year    = {2009},
  volume  = {150},
  number  = {1},
  pages   = {41--55}
}

@article{goldsmithpinkham2013social,
  author  = {Goldsmith-Pinkham, Paul and Imbens, Guido W.},
  title   = {Social Networks and the Identification of Peer Effects},
  journal = {Journal of Business \& Economic Statistics},
  year    = {2013},
  volume  = {31},
  number  = {3},
  pages   = {253--264}
}

@article{bramoulle2020peer,
  author  = {Bramoull{\'e}, Yann and Djebbari, Habiba and Fortin, Bernard},
  title   = {Peer Effects in Networks: A Survey},
  journal = {Annual Review of Economics},
  year    = {2020},
  volume  = {12},
  pages   = {603--629}
}

@unpublished{manresa2016estimating,
  author = {Manresa, Elena},
  title  = {Estimating the Structure of Social Interactions Using Panel Data},
  note   = {Working paper},
  year   = {2016}
}

@article{depaula2025identifying,
  title={Identifying network ties from panel data: Theory and an application to tax competition},
  author={De Paula, Aureo and Rasul, Imran and Souza, Pedro CL},
  journal={Review of Economic Studies},
  volume={92},
  number={4},
  pages={2691--2729},
  year={2025},
  publisher={Oxford University Press UK}
}

@unpublished{yauck2022estimation,
  author = {Yauck, Mamadou},
  title  = {On the Estimation of Peer Effects for Sampled Networks},
  note   = {Working paper},
  year   = {2022}
}

@article{breusch1980lagrange,
  author  = {Breusch, Trevor S. and Pagan, Adrian R.},
  title   = {The {L}agrange {M}ultiplier Test and its Applications to Model Specification in Econometrics},
  journal = {Review of Economic Studies},
  year    = {1980},
  volume  = {47},
  number  = {1},
  pages   = {239--253}
}

@article{phillips1999linear,
  title={Linear regression limit theory for nonstationary panel data},
  author={Phillips, Peter CB and Moon, Hyungsik R},
  journal={Econometrica},
  volume={67},
  number={5},
  pages={1057--1111},
  year={1999},
  publisher={Wiley Online Library}
}

@article{Hsiao02102022,
author = {Cheng Hsiao and Zhentao Shi and Qiankun Zhou},
title = {Transformed Estimation for Panel Interactive Effects Models},
journal = {Journal of Business \& Economic Statistics},
volume = {40},
number = {4},
pages = {1831--1848},
year = {2022},
publisher = {Taylor \& Francis},
doi = {10.1080/07350015.2021.1983438}
}

@BOOK{magnus,
  title = {Matrix Differential Calculus with Applications in Statistics and
	Econometrics},
  publisher = {John Wiley \& Sons},
  year = {2007},
  author = {J. R. Magnus and H. Neudecker}
}

@article{yin2021focused,
  title={Focused information criterion and model averaging for large panels with a multifactor error structure},
  author={Yin, Shou-Yung and Liu, Chu-An and Lin, Chang-Ching},
  journal={Journal of Business \& Economic Statistics},
  volume={39},
  number={1},
  pages={54--68},
  year={2021},
  publisher={Taylor \& Francis}
}

@article{voigtlander2014skill,
  title={Skill bias magnified: Intersectoral linkages and white-collar labor demand in US manufacturing},
  author={Voigtl{\"a}nder, Nico},
  journal={Review of Economics and Statistics},
  volume={96},
  number={3},
  pages={495--513},
  year={2014},
  publisher={The MIT Press}
}

@article{juodis2022incidental,
  author  = {Juodis, Art{\=u}ras and Reese, Simon},
  title   = {The Incidental Parameters Problem in Testing for Remaining Cross-Section Correlation},
  journal = {Journal of Business \& Economic Statistics},
  year    = {2022},
  volume  = {40},
  number  = {3},
  pages   = {1191--1203}
}

@article{hounyo2025serial,
  title={Serial dependence robust bootstrap test for cross-sectional correlation},
  author={Hounyo, Ulrich and Kao, Chihwa and Kim, Min Seong},
  journal={Econometrics Journal},
  volume={28},
  number={3},
  pages={442--461},
  year={2025},
  publisher={Oxford University Press}
}

@article{PesaranXie2026, 
    title={HOW TO DETECT NETWORK DEPENDENCE IN LATENT FACTOR MODELS? {A} BIAS-CORRECTED {CD} TEST}, 
    DOI={10.1017/S0266466625100212}, 
    journal={Econometric Theory}, 
    author={Pesaran, M. Hashem and Xie, Yimeng}, 
    year={2026}, pages={1–41}}
